\documentclass[11pt]{article}

\usepackage{url,amsmath,amsthm,amscd,amsfonts,graphicx,psfrag,marginnote}
\usepackage[all]{xy}
\usepackage{fancyhdr}
\usepackage{wrapfig}

%%% theorem environments

\newtheoremstyle{stuff}{\topsep}{\topsep}%
     {\itshape}%         Body font
     {}%         Indent amount (empty = no indent, \parindent = para indent)
     {\bfseries}% Thm head font
     {}%        Punctuation after thm head
     {.5em}%     Space after thm head (\newline = linebreak)
     {\thmnote{#3}}%         Thm head spec

\theoremstyle{plain}
\newtheorem{thm}{Theorem}

\newtheorem{lem}[thm]{Lemma}

\newtheorem{example}{Example}
\newtheorem{conjecture}{Conjecture}

\theoremstyle{remark}
\newtheorem*{rem}{Remark}

\theoremstyle{definition}
\newtheorem{defn}{Definition}
\newtheorem*{question}{Question}

\theoremstyle{stuff}

%%% equation numbering using amsmath

\numberwithin{equation}{section}
 
%%% layout

\usepackage[scale=0.7]{geometry}

%%% sections headings

\usepackage{sectsty}
\sectionfont{\center \sc \large}
\subsectionfont{\bf \large}
\subsubsectionfont{\it \large}

\usepackage{setspace}

\newcommand{\bea}{\begin{eqnarray}}
\newcommand{\eea}{\end{eqnarray}}
\newcommand{\be}{\begin{equation}}
\newcommand{\ee}{\end{equation}}
%%% header

\pagestyle{fancy}
\fancyhead[LE,RO]{\small V. Bouchard and P. Su{\l}kowski}
\fancyhead[LO,RE]{\small Topological recursion and mirror curves}

% notes in the left margin
%\reversemarginpar

%%%%%%%%%%%%%%%%%%%%%%%%%%%%%%%%%%%%%%%%%%%%%%%%%%%%%%%%%%%%%%%%%%%%%%

\begin{document}

%titlepage
\pagestyle{empty}

\begin{flushright}
\begin{tabular}{l}
CALT-68-2836 \\
\end{tabular}
\end{flushright}

{\noindent \Large \bf \fontfamily{pag}\selectfont Topological recursion and mirror curves}\\

\vspace*{0.2cm}
\noindent \rule{\linewidth}{0.5mm}

\vspace*{0.8cm}

{\noindent \fontfamily{pag}\selectfont  Vincent Bouchard}\\[0.2em]
{\small {\it \indent Department of Mathematical and Statistical Sciences\\
\indent University of Alberta\\
\indent 632 CAB, Edmonton, Alberta T6G 2G1\\
\indent Canada}\\
\indent \url{vincent@math.ualberta.ca}}\\[0.3em]

{\noindent \fontfamily{pag}\selectfont  Piotr Su{\l}kowski}\\[0.2em]
{\small {\it \indent California Institute of Technology\\
\indent Pasadena, CA 91125\\
\indent USA\\[0.2em]
\indent Faculty of Physics, University of Warsaw\\
\indent ul. Ho{\.z}a 69, 00-681 Warsaw\\
\indent Poland}\\
\indent \url{psulkows@theory.caltech.edu}}\\[0.3em]

\vspace*{0.8cm}

\hspace*{1cm}
\parbox{11.5cm}{{\sc Abstract:} We study the constant contributions to the free energies obtained through the topological recursion  applied to the complex curves mirror to toric Calabi-Yau threefolds. We show that the recursion reproduces precisely the 
corresponding Gromov-Witten invariants, which can be encoded in powers of the MacMahon function. 
As a result, we extend the scope of the ``remodeling conjecture'' to the full free energies, including the constant contributions. In the process we study how the pair of pants decomposition of the mirror curves plays an important role in the topological recursion. We also show that the free energies are not, strictly speaking, symplectic invariants, and that the recursive construction of the free energies does not commute with certain limits of mirror curves.}

\pagebreak

\pagestyle{fancy}
% table of contents
\pagenumbering{roman}

\tableofcontents 

% beginning of notes

\pagebreak
\pagenumbering{arabic}

\section{Introduction}

The ``remodeling conjecture'' \cite{Bouchard:2009, Marino:2008} asserts that the generating functions of Gromov-Witten invariants of a toric Calabi-Yau threefold $X$ are completely determined in terms of a topological recursion. The particular recursion is the Eynard-Orantin topological recursion \cite{Eynard:2007, Eynard:2008} applied to the complex curve $\Sigma$ mirror to $X$.

\subsection{Constant terms}

The simplest Gromov-Witten invariants are those involving constant maps to the target space. They are encoded by the leading constant term $N_{g,0}$ in the Gromov-Witten generating functions $F_g$. It is known since the work of \cite{Faber:2000, Gopakumar:1998, Marino:1999} that for $g \geq 2$, the constant terms are given by:
\begin{equation}\label{eq:FP1}
N_{g,0} =  \frac{1}{2} (-1)^g \chi(X) \frac{|B_{2g}| |B_{2g-2}|}{2g(2g-2)(2g-2)!},
\end{equation}
where $\chi(X)$ is the topological Euler characteristic of $X$. 

In this paper we ask the following question: is the remodeling conjecture true for the full free energies $F_g$, including constant terms, or just for the ``reduced free energies'' without the constant terms? In other words, are the $F_g$ constructed from the Eynard-Orantin recursion applied to the complex curve mirror to $X$ reproducing the ``right'' constant terms as in \eqref{eq:FP1}? The fate of constant terms is notoriously subtle, as is well known for instance from the DT/GW correspondence \cite{Maulik:2006, Maulik:2006ii}. 

As far as we are aware, constant terms have not been studied yet from the point of view of the remodeling conjecture. The main reason is that most of the checks and proofs of the conjecture have been done by comparing with the topological vertex on the Gromov-Witten side, which computes only the reduced Gromov-Witten theory. Hence not much has been said about constant maps.

Apart from clarifying the remodeling conjecture, constant terms are interesting for many reasons. For instance, over the years matrix models have been constructed which encode Gromov-Witten partition functions $Z$ of various toric geometries \cite{Aganagic:2004,Eynard:2008ii,Eynard:2010,Eynard:2010ii,SW-matrix,Marino:2004,matrix2star,Ooguri:2010}. The spectral curves of these matrix models give the corresponding mirror curves. However, the constant part of the matrix models \emph{do not} generally give the right power of the MacMahon function to recover the contributions from constant maps in Gromov-Witten theory. Thus one can ask: Are the $F_g$ constructed through the Eynard-Orantin recursion for these spectral curves giving the Gromov-Witten constant terms, or the constant terms of the corresponding matrix models? \emph{A priori}, one would think that they should give the constant terms of the matrix models, since the $F_g$ obtained from the recursion are supposed to be the free energies of the corresponding matrix models. But are the loop equations, to which the recursion is a solution, really aware of the overall matrix model normalization?

This question occurs already in the case of the resolved conifold. Here, various matrix models are known (see for instance \cite{Aganagic:2004,Eynard:2008ii,Marino:2004, Ooguri:2010}). They all give spectral curves that are ``symplectically equivalent'', hence should all produce the same free energies through the Eynard-Orantin recursion. However, these matrix models have different normalizations, \emph{i.e.} different powers of the MacMahon functions for the constant terms. More generally, matrix models built from the topological vertex formalism for arbitrary toric threefolds \cite{Eynard:2010,SW-matrix,matrix2star} \emph{by construction} do not involve any factors of MacMahon function, while matrix models constructed by other means \cite{Aganagic:2004,Marino:2004,Ooguri:2010} \emph{by construction} involve such factors. Yet, they have been shown to determine the same mirror curves \cite{Eynard:2010ii,Ooguri:2010}. How can that be? Given a spectral curve (or two symplectically equivalent curves), the $F_g$ are uniquely constructed by the recursion, including constant terms. How can they reproduce the free energies of matrix models with different constant terms?

What seems to be happening is the following. The Eynard-Orantin recursion is a solution to the loop equations of matrix models. The loop equations however do not ``know'' about constant terms; two matrix models differing only by overall normalization should give the same loop equations, hence ``symplectically equivalent'' spectral curves. The recursion however \emph{does} compute constant terms. Therefore, those are not necessarily the constant terms of the corresponding matrix models; the recursion replaces the constant terms of the matrix models by its own preferred constant terms. 

And what we argue in this paper is that, magically, \emph{the constant terms computed through the recursion are precisely those of Gromov-Witten theory}, as given in \eqref{eq:FP1}! In other words, the recursion knows the right constant terms. As a result, we assert that the remodeling conjecture holds for the full free energies, including constant terms.

What we show is that for the simplest Calabi-Yau threefold $X = \mathbb{C}^3$, the $F_g$ are given by the Faber-Pandharipande formula \eqref{eq:FP1} with $\chi(X) = 1$. Then, we argue that for any toric Calabi-Yau threefold $X$, the constant part of the $F_g$ will be given by $\chi(X)$ times the $F_g$ of $\mathbb{C}^3$, thus recovering \eqref{eq:FP1} for all toric Calabi-Yau threefolds.

It would be nice to understand better, from a matrix model point of view, why the Eynard-Orantin recursion does produce the right constant terms as in Gromov-Witten theory. This remains quite mysterious, and should be addressed in future work.

As an aside we study, along the way, in more detail the notion of ``symplectic invariance'' of the free energies $F_g$, as well as their behavior under certain limits of mirror curves. It turns out that the $F_g$ are not, strictly speaking, symplectic invariants; we  show that symplectic transformations that change the number of ramification points do not leave the $F_g$ invariant. In the context of the remodeling conjecture, what happens is that there exists ``bad choices'' of framing for which the topological recursion does not produce the correct free energies. In fact, representations of the mirror curves quite often encountered in literature, such as (\ref{curveC3}) for $\mathbb{C}^3$ and (\ref{curveConifold}) for the conifold, correspond to such pathological choices of framing. 
We also demonstrate that the recursive construction of the $F_g$ does not commute with certain limits of mirror curves.
We plan to clarify these issues further.

\subsection{Pair of pants decomposition}

A related question that we address in this paper is whether the pair of pants decomposition of the mirror curves plays a role in the recursion. Let $\Sigma$ be the curve mirror to a toric Calabi-Yau threefold $X$. Any such $\Sigma$ has a pair of pants decomposition. What we show is that this pair of pants decomposition is in one-to-one correspondence with the $\mathbb{C}^3$ patches decomposition of the mirror toric threefold $X$. And just as the $\mathbb{C}^3$ patches decomposition plays a crucial role in Gromov-Witten theory through the topological vertex formalism \cite{Aganagic:2005, Li:2009, Maulik:2011}, the pair of pants decomposition also plays an important role on the mirror B-model side through the Eynard-Orantin recursion. 

It turns out that the residue process at the ramification points built in at each level of the Eynard-Orantin recursion implements directly this pair of pants decomposition. Each pair of pants has its corresponding ramification point, and summing over ramification points means that we are summing over contributions for each pair of pants, at each level of the recursion. In particular, for a given $X$, we argue that each pair of pants contribute precisely the same amount to the constant part of its $F_g$, namely, the contribution coming from constant maps to $\mathbb{C}^3$. In the case of the resolved conifold, we prove that the two pairs of pants contribute precisely the same amount to the full $F_g$, not just for constant terms. It would be very nice to see how to understand the pair of pants decomposition of the free energies for general toric Calabi-Yau threefolds $X$ beyond the constant terms.

As a result, the residue process seems to be a close analog to the localization procedure in Gromov-Witten theory of toric manifolds. It would also very interesting to make this analogy more precise.

\subsection{Outline}

In section 2, we review the basics of Gromov-Witten theory of toric Calabi-Yau threefolds and the construction of the corresponding mirror curves. We also review the Eynard-Orantin recursion, and state the remodeling conjecture. Section 3 is devoted to the analysis of constant terms. We first explain the role of ramification points in the recursion, and how they are mirror to the torus fixed points of the toric Calabi-Yau threefold. We conjecture (and check to relatively high genus) that the $F_g$ constructed from the mirror curve to $\mathbb{C}^3$ are equal to the Gromov-Witten result \eqref{eq:FP1} with $\chi(X) = 1$. Then we argue that for a general toric threefold $X$, each pair of pants contributes one copy of $F_g^{\mathbb{C}^3}$, thus recovering the full Faber-Pandharipande formula \eqref{eq:FP1} for general $X$. We prove this result explicitly for the case of the resolved conifold. 

Finally, in section 4 we discuss some of the issues that were encountered during the analysis of section 3. In particular, we discuss the notion of symplectic invariance of the $F_g$. We show that under some symplectic transformations that do not preserve the number of ramification points, the $F_g$ are not invariant. This analysis also gives a new meaning to the framing of mirror curves, as some sort of ``regularization parameter''. We also discuss the relation between our results for constant terms and the limit theorem obtained by Eynard and Orantin in \cite{Eynard:2007}.

\subsection*{Acknowledgments}
We would like to thank Cedric Berndt, Bertrand Eynard, Sergei Gukov, Amir Kashani-Poor and Olivier Marchal for enjoyable discussions. The research of V.B. is supported by a University of Alberta startup grant and an NSERC Discovery grant. The research of P.S. is supported by the DOE grant DE-FG03-92ER40701FG-02 and the European Commission under the Marie-Curie International Outgoing Fellowship Programme.

\section{Topological recursion and the remodeling conjecture}

In this section we review %the context of 
the remodeling conjecture, and the definition of the topological recursion that we will be interested in.

\subsection{Gromov-Witten theory of toric Calabi-Yau threefolds}

\label{s:GW}

In this paper we will be interested in Gromov-Witten theory of toric Calabi-Yau threefolds.

\subsubsection{Toric Calabi-Yau threefolds}

Let $X$ be a toric threefold. It can be written as
\begin{equation}
X = \frac{\mathbb{C}^{3+k} \setminus S}{(\mathbb{C}^*)^k},
\end{equation}
where the $k$ $\mathbb{C}^*$ actions are given by
\begin{equation}
(\mathbb{C}^*)^i: (z_1, \ldots, z_{3+k}) \mapsto (\lambda^{T_{i1}} z_1, \ldots, \lambda^{T_{i(3+k)}} z_{3+k}), \quad i=1,\ldots,k, \quad \lambda \in \mathbb{C}^*,
\end{equation}
where the $T_{ij} \in \mathbb{Z}$ and $S$ is a subset which is fixed by a continuous subgroup of $(\mathbb{C}^*)^k$. In this notation, the geometry is determined by the $k$ vectors $T_i$, sometimes known as \emph{toric charges}.

It is well known that $X$ is Calabi-Yau if and only if
\begin{equation}
\sum_{a=1}^{3+k} T_{i a} = 0, \qquad i=1,\ldots,k.
\end{equation}
As a result, every toric Calabi-Yau threefold is noncompact. Toric Calabi-Yau threefolds can be represented by trivalent graphs, known as \emph{toric diagrams}; the edges represent the torus-invariant curves in $X$, while the vertices correspond to the torus fixed points.

\begin{example}
Here a few examples of toric Calabi-Yau threefolds and their toric charges. The toric diagrams are shown in fig. \ref{fig-toric}.
\begin{itemize}
\item $X = \mathbb{C}^3$, which is the simplest toric Calabi-Yau threefold. 
\item $X = \mathcal{O}_{\mathbb{P}^1} (-1) \oplus  \mathcal{O}_{\mathbb{P}^1} (-1) $, known as the \emph{resolved conifold}. Its toric charge is $T = (-1, -1, 1, 1)$.
\item $X = \mathcal{O}_{\mathbb{P}^2}(-3)$, known as \emph{local $\mathbb{P}^2$}, which has toric charge $T = (-3,1,1,1)$.
\end{itemize} 
\end{example}

\begin{figure}[htb]
\begin{center}
\includegraphics[width=\textwidth]{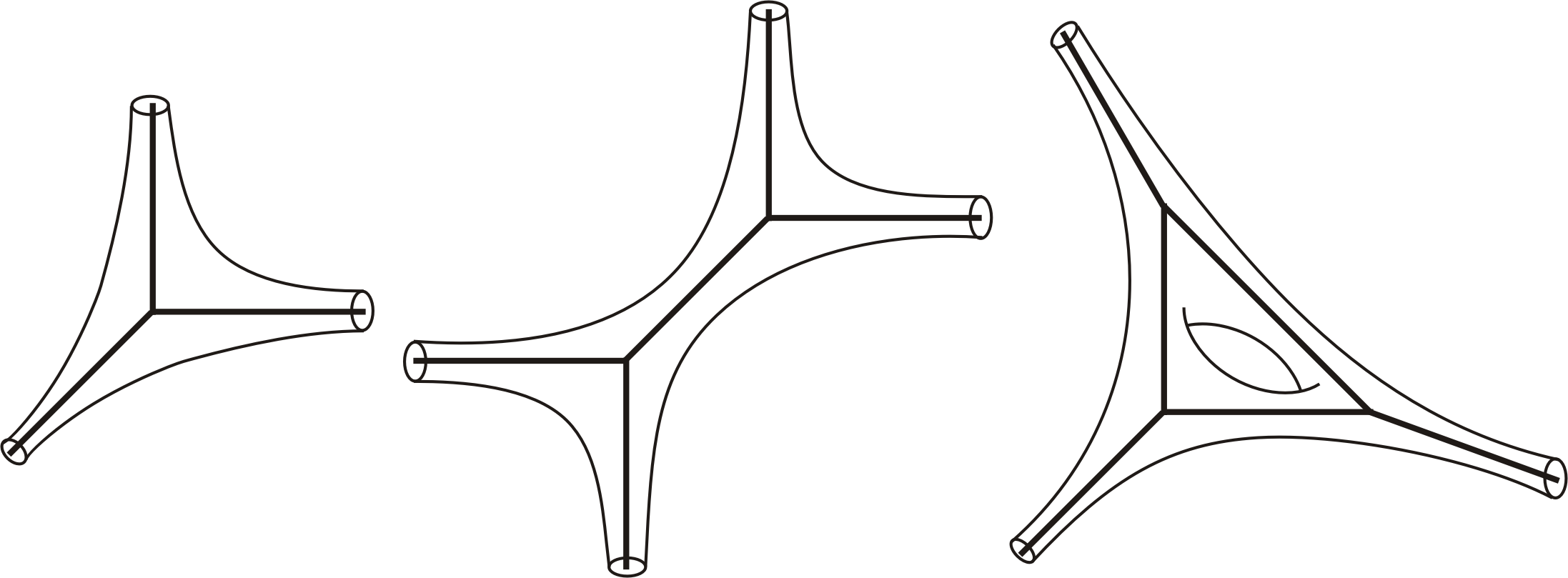} 
\begin{quote}
\caption{\emph{Toric diagrams for $\mathbb{C}^3$, resolved conifold, and local $\mathbb{P}^2$. The corresponding mirror curves arise by thickening the edges of toric diagrams. The trivalent vertices in the toric diagram represent torus fixed points. 
As we demonstrate in section \ref{s:bpoints}, these torus fixed points are in one-to-one correspondence with ramification points of the corresponding mirror curve (for generic choice of framing). Hence the A-model computation based on gluing of topological vertices is mirrored by the pair of pants decomposition of the topological recursion computation in the B-model.}   } \label{fig-toric}
\end{quote}
\end{center}
\end{figure}

\subsubsection{Gromov-Witten theory}

Consider the moduli space $\mathcal{M}_{g,n}(X, \beta)$ of stable maps $f : S_g \to X$ from $n$-pointed genus $g$ Riemann surfaces $S_g$, with homology class $f_*[S_g] = \beta \in H_2(X, \mathbb{Z})$. To define Gromov-Witten invariants, we consider the Deligne-Mumford compactification $\overline{\mathcal{M}}_{g,n}(X, \beta)$ of the moduli space of stable maps, and then construct its virtual fundamental class $[ \overline{\mathcal{M}}_{g,n}(X, \beta)]_{\text{virt}}$. In this paper we focus on the case with no marked points, $\overline{\mathcal{M}}_{g}(X. \beta):= \overline{\mathcal{M}}_{g,0}(X. \beta)$. 

Roughly speaking, Gromov-Witten invariants are then defined as integrals of appropriate cohomology classes over this virtual fundamental class:
\begin{equation}
\langle  \cdots \rangle_{g,\beta} = \int_{[\overline{\mathcal{M}}_{g}(X. \beta)]_{\text{virt}}} ( \cdots ).
\end{equation}
For $X$ a Calabi-Yau threefold, $\overline{\mathcal{M}}_{g}(X, \beta)$ has virtual dimension $0$. Thus we can define the following Gromov-Witten invariants (for $g \geq 2$):
\begin{equation}
N_{g,\beta} = \langle \rangle_{g,\beta} = \int_{[\overline{\mathcal{M}}_{g}(X, \beta)]_{\text{virt}}} 1 = \text{deg}[\overline{\mathcal{M}}_{g}(X, \beta)]^{\text{virt}} .
\end{equation}
Those are the invariants that we will be interested in.

As usual, we form generating functions for these invariants:
\begin{align}
F_g =& \sum_{\beta \in H_2(X,\mathbb{Z})} N_{g,\beta} Q^\beta \qquad \text{(genus $g$ free energies)}, \\
F =& \sum_{g=0}^\infty \lambda^{2g-2} F_g \qquad \text{(free energy)}, \label{Flambda} \\
Z=&  \text{exp}( F) \qquad \text{(partition function)}.
\end{align}
 
It is also customary to separate the constant maps $\beta=0$ from the non-constant ones. We define the \emph{reduced free energies} and \emph{reduced partition function} as
\begin{align}
F_g^{\beta \neq 0} =& \sum_{\substack{\beta \in H_2(X,\mathbb{Z})\\ \beta \neq 0}} N_{g,\beta} Q^\beta,    \label{Fg-beta}  \\
F^{\beta \neq 0} =& \sum_{g=0}^\infty \lambda^{2g-2} F_g^{\beta \neq 0},\\
Z^{\beta \neq 0}=& \text{exp}( F^{\beta \neq 0}).
\end{align}
Then we have that
\begin{equation}
Z = Z^{\beta \neq 0} Z^{\beta = 0},   \label{Ztotal}
\end{equation}
where 
\begin{equation}\label{eq:constant}
Z^{\beta=0} = \text{exp}\left(\sum_{g=0}^\infty \lambda^{2g-2} N_{g,0}  \right)
\end{equation}
involves only the constant map contributions.

\subsubsection{What is known}

\label{s:known}

Gromov-Witten theory of toric Calabi-Yau threefolds has been studied extensively. For any toric Calabi-Yau threefold $X$, the reduced partition function $Z^{\beta \neq 0}$ can be computed using the so-called \emph{topological vertex} \cite{Aganagic:2005, Li:2009, Maulik:2011}. The topological vertex however does not say anything about the constant contributions $Z^{\beta=0}$.

Fortunately the constant contributions can be computed independently \cite{Faber:2000, Gopakumar:1998, Marino:1999}. For $g \geq 2$, one gets that:
\begin{equation}\label{eq:FP}
N_{g,0} =  \frac{1}{2} (-1)^g \chi(X) \frac{|B_{2g}| |B_{2g-2}|}{2g(2g-2)(2g-2)!},
\end{equation}
where $\chi(X)$ is the topological Euler characteristic of $X$. For $X$ a toric Calabi-Yau threefold,  it is easy to show that $\chi(X)$ is equal to the number of torus fixed points (that is, the number of vertices in the toric diagram of $X$). In particular, for the simplest toric Calabi-Yau threefold $X = \mathbb{C}^3$, which has $\chi(X) = 1$, we have
\begin{equation}\label{eq:Fgc3}
N_{g,0}^{\mathbb{C}^3} = \frac{ 1}{2} (-1)^g \frac{|B_{2g}| |B_{2g-2}|}{2g(2g-2)(2g-2)!}. 
\end{equation}
For a general toric Calabi-Yau threefold, we thus obtain
\begin{equation}\label{eq:Fgrelation}
N_{g,0}^X = \chi(X) N_{g,0}^{\mathbb{C}^3}.
\end{equation}

The result \eqref{eq:FP} can be neatly rewritten as  \cite{Behrend:2005, Li:2006,Maulik:2006, Maulik:2006ii}
\begin{equation}
Z^{\beta=0} = M(q)^{\frac{1}{2}\chi(X)},
\end{equation}
where $M(q)$ is the MacMahon function
\begin{equation}\label{eq:macmah}
M(q) = \prod_{k=1}^\infty \left(1-q^k \right)^{-k},
\end{equation}
with $q = \mathrm{e}^{\mathrm{i} \lambda}$.

\subsection{Mirror symmetry}

\label{s:mirror}

The free energy $F$ of A-model topological string theory on a toric Calabi-Yau threefold $X$ generates the Gromov-Witten invariants as in the previous subsection. It is however often useful to study the free energy $F$ from a dual point of view. Mirror symmetry relates the A-model free energy to the B-model topological string free energy, which we also denote by $F$, through the mirror map. The remodeling conjecture proposes a recursive formula to compute the B-model genus $g$ free energies $F_g$. Let us start by explaining what the mirror B-model theory looks like.

\subsubsection{The mirror curve}

Here we construct the mirror geometry following Hori-Vafa \cite{Hori:2000}. We are interested in the mirror B-model theory to the A-model on a toric Calabi-Yau threefold $X$. The mirror theory is generally formulated as a Landau-Ginzburg theory. However, it was shown that it essentially reduces to the geometry of a complex curve, known as the \emph{mirror curve}. In fact, the remodeling conjecture is formulated entirely in terms of this mirror curve. Therefore here we will define the mirror theory directly in terms of a mirror curve, skipping the Landau-Ginzburg step.

Let $X$ be a toric Calabi-Yau threefold defined by some toric charges $T_i$, $i=1,\ldots,k$. Intuitively, the mirror curve can be obtained by ``fattening'' the toric diagram, as shown in fig. \ref{fig-toric}. Mathematically, it is defined as follows.

\begin{defn}
The \emph{mirror curve} of a toric Calabi-Yau threefold $X$ defined by the toric charges $T_i$, $i=1,\ldots,k$, is given by the family of curves:
\begin{equation}\label{eq:mc}
\Sigma = \left\{1 + x_2 + \ldots +x_{3+k} = 0 ~\Big |~ r_i  = \prod_{m=2}^{3+k} x_m^{T_{im}}, ~ i=1,\ldots,k \right\},
\end{equation}
where $x_m \in \mathbb{C}^*$ for $m=2, \ldots, 3+k$.
\end{defn}
Note that this defines a family of curves in $(\mathbb{C}^*)^2$, where the family is parameterized by the $r_i$. For instance, we can choose $x:=x_2$ and $y:=x_3$ to be our $\mathbb{C}^*$ variables, and eliminate the other $x_m$'s using the definition of the parameters $r_i$. The mirror curve $\Sigma$ defines a Riemann surface with punctures (which we also denote by $\Sigma$), and $x$ and $y$ are holomorphic functions on $\Sigma$.

\begin{example}
Let $X=\mathbb{C}^3$. According to \eqref{eq:mc}, the mirror curve is
\begin{equation}
\Sigma = \left \{ 1 + x + y = 0 \right \} \subset (\mathbb{C}^*)^2.     \label{curveC3}
\end{equation}
The associated Riemann surface has genus $0$ and three punctures (shown in figure \ref{fig-toric}).
\end{example}

\begin{example}
Let $X$ be the resolved conifold, $X= \mathcal{O}_{\mathbb{P}^1} (-1) \oplus  \mathcal{O}_{\mathbb{P}^1} (-1) $, with toric charge $T = (-1, -1, 1, 1)$. The mirror curve is
\begin{equation}
\Sigma = \left \{ 1 +x +y + r x y^{-1} = 0 \right \} \subset (\mathbb{C}^*)^2 .        \label{curveConifold}
\end{equation}
The associated Riemann surface has genus $0$ and four punctures (shown in figure \ref{fig-toric}). Note that $\Sigma$ is a family of curves, parameterized by $r$.
\end{example}

\begin{example}
Consider local $\mathbb{P}^2$, $X = \mathcal{O}(-3)_{\mathbb{P}^2}$, with toric charge $T=(-3,1,1,1)$. The mirror curve is
\begin{equation}
\Sigma = \left \{ 1 +x +y + r x^{-1} y^{-1} = 0 \right \}.    \label{curveP2}
\end{equation}
The associated Riemann surface has genus $1$ and three punctures (shown in figure \ref{fig-toric}).
\end{example}

\subsubsection{Framing}

It turns out that the mirror curve can also depend on another parameter, known as ``framing''. In Gromov-Witten theory, the choice of framing appears because one needs to fix the torus weights in order to apply localization under the torus action. It can also be understood from the point of view of large $N$ dualities with Chern-Simons theory. For us, the choice of framing appears as a reparameterization of the curve.

\begin{defn}
Let $\Sigma$ be a mirror curve, given by the locus $\{H(x,y)=0\} \subset (\mathbb{C}^*)^2$. We define the associated \emph{framed curve} $\Sigma_f$ by the $\mathbb{C}^*$ reparameterization
\begin{equation}
Y = y, \qquad X = x  y^f.   \label{def-framing}
\end{equation}
The framed curve is given by the locus $\{H_f(X,Y)=0\} \subset (\mathbb{C}^*)^2$, and it depends on a new parameter $f$, known as \emph{framing}.
\end{defn}

\begin{rem}
Note that framing may seem a bit \emph{ad hoc}, but as we will see it is rather crucial.
\end{rem}

\begin{example}
The mirror curve to $\mathbb{C}^3$ is \eqref{curveC3}. The framed curve $\Sigma_f$ is then
\begin{equation}
\Sigma_f = \left \{ 1 + X Y^{-f} + Y = 0 \right \} \subset (\mathbb{C}^*)^2,
\end{equation}
or equivalently
\begin{equation}\label{eq:fmc3}
\Sigma_f = \left \{ X+ Y^f + Y^{f+1} = 0 \right \} \subset (\mathbb{C}^*)^2,
\end{equation}
\end{example}

\subsection{Topological recursion}

\label{s:recursion}

The remodeling conjecture is based on a particular topological recursion, known as Eynard-Orantin recursion \cite{Eynard:2007, Eynard:2008}. In this section we define the Eynard-Orantin topological recursion.

\subsubsection{Ingredients}

We start with a smooth affine plane curve
\begin{equation}
C = \{ H(x,y) = 0 \} \in \mathbb{C}^2.
\end{equation}
It defines a non-compact Riemann surface, which we also denote by $C$. $x,y: C \to \mathbb{C}$ are holomorphic functions on $C$. As usual, $C$ can be compactified to $\hat{C}$ by adding points at infinity, and $x,y: \hat{C} \to \mathbb{C}_\infty$ become meromorphic functions on the compact Riemann surface $\hat{C}$.

We assume that the map $x: C \to \mathbb{C}$ has only simple ramification points. Let $\{a_1, \ldots, a_n\} \in C$ be the set of simple ramification points of $x$. Locally, at each $a_\lambda$, $\lambda=1, \ldots, n$, the map is a double-sheeted covering, hence we have a deck transformation map
\begin{equation}
s_\lambda: U_\lambda \to U_\lambda
\end{equation}
which is defined locally in a neighborhood $U_\lambda$ of $a_\lambda$. The deck transformation map means that
\begin{equation}
x(t) = x(s_\lambda(t))
\end{equation}
for some local coordinate $t$ near $a_\lambda$.

The type of objects that we will be interested in are meromorphic symmetric differentials. Let $\mathcal{M}^1_C$ be the sheaf of meromorphic one-forms on $C$. A degree $n$ meromorphic differential $W_n(p_1, \ldots, p_n)$ is a section
\begin{equation}
W_n(p_1, \ldots, p_n) \in H^0 ( C^n,  \bigotimes_{i=1}^n \pi_i^* \mathcal{M}^1_C ),
\end{equation}
where $C^n$ is the Cartesian product of $n$ copies of $C$, and  the $\pi_i$, $i=1, \ldots, n$ are the projections on each individual factor. In local coordinates $z_i := z(p_i)$, $p_i \in C$, $i=1,\ldots,n$ a degree $n$ differential can be written as\footnote{For simplicity we will omit the tensor product symbol $\otimes$ between the differentials.}
\begin{equation}
W_n(p_1, \ldots, p_n) = w_n(z_1, \ldots, z_n) \mathrm{d} z_1 \cdots \mathrm{d} z_n,
\end{equation}
where $w(z_1, \ldots, z_n)$ is meromorphic in each variable.

To define the recursion we need to introduce a particular degree $2$ differential.
\begin{defn}\label{d:bergman}
We define $W_2^0(p_1,p_2)$ to be the \emph{fundamental normalized bi-differential}  \cite[p.20]{Fay:1970} which is uniquely defined by the conditions:
\begin{itemize}
\item It is symmetric, $W_2^0(p_1, p_2) = W_2^0(p_2,p_1)$;
\item It has its only pole, which is double, along the diagonal $p_1 = p_2$, with no residue; its expansion in this neighborhood has the form
\begin{equation}
W_2^0(p_1, p_2) = \left(\frac{1}{(z_1-z_2)^2} + \text{regular} \right) \mathrm{d} z_1 \mathrm{d} z_2;
\end{equation}
\item It is normalized by requiring that its periods about a basis of $A$-cycles on $C$ vanish.\footnote{$W_2^0(p_1,p_2)$ has also been called \emph{Bergman kernel} in the literature. It is the second order derivative of the $\log$ of the prime-form on $C$ \cite{Fay:1970}.}
\end{itemize}
\end{defn}

\begin{example}
For $C = \mathbb{P}^1$, $W_2^0$ is simply the Cauchy differentiation kernel:
\begin{equation}\label{eq:cauchy}
W_2^0(p_1, p_2) = \frac{\mathrm{d} z_1 \mathrm{d} z_2}{(z_1 - z_2)^2}.
\end{equation}
\end{example}

Having now defined the main ingredients, we can introduce the Eynard-Orantin recursion.

\subsubsection{The Eynard-Orantin topological recursion}

Let $\{ W^g_n \}$ be an infinite sequence of meromorphic differentials $W^g_n(p_1, \ldots, p_n) \in M^n_C$ for all integers $g\geq 0$ and $n > 0$ satisfying the condition $2g - 2 + n \geq 0$. We say that the differentials with $2g-2+n > 0$ are \emph{stable}; $W_2^0(p_1, p_2) \in M^2_C$ is the only unstable differential.

Let us introduce the shorthand notation $S = \{p_1, \ldots, p_{n} \}$. Then:
\begin{defn}\label{d:recursion}
We say that the meromorphic differentials $W_n^g$ satisfy the \emph{Eynard-Orantin topological recursion} if:
\begin{equation}
W^g_{n+1}(p_0, S) =  \sum_{\lambda=1}^n \underset{q=a_\lambda}{\text{Res}}  K_\lambda(p_0, q) \Big(W^{g-1}_{n+2} (q, s_\lambda(q), S) +\sum_{\substack{g_1 + g_2 = g \\ I \cup J = S}} W^{g_1}_{|I|+1}(q, I) W^{g_2}_{|J|+1}(s_\lambda(q), J) \Big),
\end{equation}
where $K_\lambda(p_0,q)$ is the \emph{Eynard kernel} defined below. The recursion here is on the integer $2g - 2 + n$, which is why it is called a topological recursion. The initial condition of the recursion is given by the unstable $W_2^0 \in M^2_C$ defined above.
\end{defn}

\begin{defn}\label{d:kernel}
The \emph{Eynard kernel} $K_\lambda(p_0,q)$ is defined, in local coordinate $q$ near $a_\lambda$, by
\begin{equation}\label{eq:eynard}
K_\lambda(p_0,q) = \frac{1}{2} \frac{\int_q^{s_\lambda(q)} W^0_2(p_0,q')}{\omega(q) - \omega(s_\lambda(q))},
\end{equation}
where $\omega(q)$ is the meromorphic one-form $\omega(q) = y(q) \mathrm{d} x(q)$. Here, $\frac{1}{\mathrm{d} x(q)}$ is the contraction operator with respect to the vector field $\left( \frac{\mathrm{d} x}{\mathrm{d} q} \right)^{-1} \frac{\partial}{\partial q}$.
\end{defn}

Definitions \ref{d:bergman}, \ref{d:recursion} and \ref{d:kernel} together define the Eynard-Orantin topological recursion for the curve $C$. 

\subsubsection{The $F_g$'s}

We can also extend the construction to $n = 0$ objects, $F_g := W_0^g$, which are just numbers. To construct the $F_g$, $g \geq 2$ (the stable ones), we need an auxiliary equation. Let us first define
\begin{equation}\label{eq:primitive}
\Phi(q) = \int_0^q \omega(q'),
\end{equation}
which is the primitive of the one-form $\omega(q)$ for an arbitrary base point $0$. We then define:
\begin{defn}\label{d:fg}
The numbers $F_g$, $g \geq 2$, are constructed from the one-forms $W_1^g(p)$ by:
\begin{equation}
F_g = \frac{(-1)^g}{2-2g} \sum_{\lambda=1}^n \underset{q=a_\lambda}{\text{Res}} \Phi(q) W_1^g(q).
\end{equation}
\end{defn}  

\begin{rem}
We note that in the definition of $F_g$ we introduced a factor $(-1)^g$ which is absent in the original formalism \cite{Eynard:2007}. This factor arises from the fact that the generating parameter $\lambda$ in Gromov-Witten theory, introduced in (\ref{Flambda}), is related to the generating parameter $\hbar$ considered in the formalism of \cite{Eynard:2007} as $\hbar=-i \lambda$. One way to see this identification arises from the fact that MacMahon function (\ref{eq:macmah}) should be expressed in terms of a unified parameter $q=e^{-\hbar}=e^{i\lambda}$.
\end{rem}

To summarize, given an affine curve $C$, the Eynard-Orantin topological recursion constructs an infinite tower of meromorphic differentials $W_n^g(p_1, \ldots, p_n)$ (definition \ref{d:recursion}), and numbers $F_g := W_0^g$ (definition \ref{d:fg}), for $g \geq 0$, $n >0$, satisfying the stability condition $2g-2+n > 0$. The recursion kernel is the Eynard kernel (definition \ref{d:kernel}), and the initial condition of the recursion is the fundamental normalized bi-differential on $C$ (definition \ref{d:bergman}).

The Eynard-Orantin topological recursion appeared in the realm of matrix models. It turns out that the large $N$ limit of Hermitian matrix models is encoded in an affine curve, known as the \emph{spectral curve} of the matrix model. If we choose $C$ to be this spectral curve, then the meromorphic differentials $W_n^g$ and numbers $F_g$ compute respectively the correlation functions and free energies of the matrix model in its large $N$ limit. This is where the recursion comes from. For this reason, we will often refer to the $W_n^g$ as \emph{correlation functions}, and to the $F_g$ as \emph{free energies}.

However, it has been realized that the same topological recursion has many applications in other areas of mathematics, in particular in Gromov-Witten theory and mirror symmetry, which is the content of the ``remodeling conjecture'' \cite{Bouchard:2009, Marino:2008} to which we now turn to.

\subsection{The remodeling conjecture}

The remodeling conjecture  \cite{Bouchard:2009, Marino:2008} is an application of the Eynard-Orantin recursion in the world of Gromov-Witten theory and mirror symmetry. Roughly speaking, the statement of the conjecture is that if we apply the Eynard-Orantin recursion to the complex curve $\Sigma$ mirror to a toric Calabi-Yau threefold $X$, the $W_g^n$ and $F_g$ constructed by the recursion compute respectively the open and closed genus $g$ amplitudes of B-model topological string theory on $\Sigma$, which are mirror to the open and closed generating functions of Gromov-Witten invariants of $X$.

\subsubsection{The Eynard-Orantin recursion on mirror curves}

Let us now be a little more precise. We explained how to construct the mirror curves $\Sigma$ in subsection \ref{s:mirror}. Now we want to apply the Eynard-Orantin topological recursion, described in subsection \ref{s:recursion} to these mirror curves.

The first thing to notice is that the mirror curves are slightly different from the affine curves used to define the Eynard-Orantin recursion. 

First, the mirror curves $\Sigma$ are families of curves, depending on the parameters $r_i$, $i=1, \ldots, k$. This means that the differentials $W^g_n$ will depend on the $r_i$, and the $F_g$ will also be functions of the $r_i$.

Second, the mirror curves are algebraic curves in $(\mathbb{C}^*)^2$ instead of $\mathbb{C}^2$. Nevertheless, we can still apply the Eynard-Orantin recursion, if we replace the one-form $\omega(q)$ by its $\mathbb{C}^*$ version $\omega(q) = \log y(q) \frac{\mathrm{d} x(q)}{x(q)}$.\footnote{Equivalently, we could work in exponential variables, in which case the one-form would remain $y \mathrm{d }x$.} This being said, the fact that the mirror curves are in $(\mathbb{C}^*)^2$ has important consequences for properties of the meromorphic differentials and free energies constructed through the recursion.

\subsubsection{Statement of the conjecture}    \label{sss:conjecture1}

We can now state the remodeling conjecture more precisely. Its statement could be split in two parts:
\begin{conjecture}[Remodeling conjecture  \cite{Bouchard:2009, Marino:2008}]
Let $\Sigma_f$ be the framed mirror curve to a toric Calabi-Yau threefold $X$. 
\begin{enumerate}
\item The free energies $F_g$ constructed by the Eynard-Orantin recursion are mapped by the mirror map to the genus $g$ generating functions of Gromov-Witten invariants of $X$.
\item The correlation functions $W_g^n$ are mapped by the open/closed mirror map to the generating functions of framed open Gromov-Witten invariants.
\end{enumerate}
\end{conjecture}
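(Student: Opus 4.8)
The plan is to prove the conjecture in two stages: first establish it for the elementary building block $X=\mathbb{C}^3$, and then propagate the result to an arbitrary toric Calabi-Yau threefold using the pair-of-pants decomposition of $\Sigma$, which is in one-to-one correspondence with the $\mathbb{C}^3$-patch decomposition of $X$ on the A-side. The guiding principle is that the Eynard-Orantin recursion is \emph{local} at the ramification points $a_\lambda$: the residues in Definition \ref{d:recursion} only see the germ of the spectral data near each $a_\lambda$, and these ramification points are mirror to the torus fixed points (the trivalent vertices). Thus each application of the recursion naturally decomposes into contributions indexed by vertices, mirroring the way the topological vertex assembles local contributions on the A-side.

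For the local model, take the framed mirror curve of $\mathbb{C}^3$, equation \eqref{eq:fmc3}. Since this $\Sigma$ has genus $0$, the Bergman kernel is the Cauchy kernel \eqref{eq:cauchy}, so all global period ambiguities vanish and the recursion can be evaluated in closed form. The key step is to expand the Eynard kernel $K_\lambda(p_0,q)$ of Definition \ref{d:kernel} in a local coordinate $t$ with $x=x(a_\lambda)+t^2$ near the ramification point, and to identify the resulting residue expansion with a Laplace transform of Hodge integrals over $\overline{\mathcal{M}}_{g,n}$ (products of $\psi$-classes and Hodge $\lambda$-classes). Concretely, one shows that the $W_n^g$ produced by the recursion equal, after the open mirror map $X=X(z)$, the generating series of framed open Gromov-Witten invariants of $\mathbb{C}^3$, which by the Mari\~no-Vafa formula and the evaluation of the topological vertex are expressed through such Hodge integrals. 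Closing the punctures via Definition \ref{d:fg} then yields the closed $F_g$, which for $\beta=0$ must reproduce \eqref{eq:Fgc3}; this is exactly the constant-term check carried out elsewhere in the paper, and it pins down the framing-dependent normalization that the Laplace-transform matching requires.

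With the vertex established, the second stage is the gluing. Here the strategy is to cut $\Sigma$ along the $A$-cycles of its pair-of-pants decomposition and to express the global Bergman kernel $W_2^0$ as a sum of the local Cauchy kernels of the individual pairs of pants plus ``edge propagators'' encoding the gluing and the complex-structure parameters $r_i$. One then checks that the recursion, applied globally, reassembles into the topological-vertex gluing rule: the sum over ramification points in Definition \ref{d:recursion} reproduces the sum over vertices, while the off-diagonal part of $W_2^0$ reproduces the framing monomials and edge factors of the vertex formalism. Simultaneously one must match the closed and open mirror maps, relating the $r_i$ to the K\"ahler parameters $Q^\beta$ and the puncture positions to the open moduli, so that the $B$-model series in the $r_i$ become the $A$-model series in $Q^\beta$.

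The hardest part will be this gluing step, precisely because the Eynard-Orantin recursion is \emph{globally} defined through the normalized Bergman kernel, whereas the topological vertex is built from \emph{local} data by combinatorial gluing. Reconciling the two requires controlling how the global periods of $\Sigma$ --- hence the nontrivial part of $W_2^0$ for curves of positive genus, such as local $\mathbb{P}^2$ in \eqref{curveP2} --- enter the recursion, and showing that they are exactly absorbed by the mirror map and the edge propagators. Two further subtleties must be handled: the framing dependence, which (as the paper warns) admits ``bad'' choices where the number of ramification points jumps and the $F_g$ fail to be symplectic invariants, so the matching can hold only for generic framing; and the all-genus bookkeeping, where the $2g-2+n$ induction of the recursion must be shown compatible with the degeneration structure of $\overline{\mathcal{M}}_{g,n}$ underlying the vertex. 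Establishing the required Laplace-transform identity between the recursion output and the Hodge-integral generating functions, uniformly in genus, is the technical crux on which the whole argument rests.
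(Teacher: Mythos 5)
There is a genuine gap here, and it is worth being precise about what it is. The statement you are addressing is the remodeling conjecture itself, which the paper does \emph{not} prove --- it is stated as a conjecture, and the paper's own contribution is limited to (i) checking computationally, up to genus $7$, that the $F_g$ of the framed $\mathbb{C}^3$ mirror curve reproduce the Faber--Pandharipande constants \eqref{eq:Fgc3} (this is Conjecture \ref{c:c3}, of which the authors explicitly say ``we cannot prove this conjecture at the moment''), and (ii) proving the pair-of-pants decomposition of the constant terms only for the resolved conifold, via the explicit symmetry $t \mapsto a_0 a_1/t$ of that particular curve. Your proposal follows the same two-stage architecture (vertex, then gluing), but at both stages the decisive step is asserted rather than established. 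For the vertex, you write that closing the punctures via Definition \ref{d:fg} ``must reproduce \eqref{eq:Fgc3}'' and defer to ``the constant-term check carried out elsewhere in the paper'' --- but that check is a finite computational verification, not a proof, and the passage from the $W_n^g$ (where the Chen--Zhou results apply) to the numbers $F_g$ is exactly where the paper itself gets stuck. The Laplace-transform identification of the residue expansion with Hodge integrals, ``uniformly in genus,'' is named by you as the technical crux but is nowhere carried out.

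The gluing stage has the same character. The claim that the global Bergman kernel of a positive-genus mirror curve (e.g.\ local $\mathbb{P}^2$) splits into local Cauchy kernels plus edge propagators that reassemble the topological-vertex gluing rule is precisely the content of the Eynard--Kashani-Poor--Marchal program, which the paper describes as an outline in which ``a few gaps remain''; restating it as a plan does not close those gaps. You correctly identify the framing subtlety (bad framings change the number of ramification points and break the matching), which aligns with Lemma \ref{l:ram} and Section \ref{ss-symplectic}, and your guiding principle --- residues at ramification points mirror the torus fixed points --- is exactly the paper's Lemma \ref{l:ram}. But as written your text is a research program, not a proof: every step that would actually need an argument (the Hodge-integral identity, the decomposition of $W_2^0$, the absorption of periods into the mirror map, the compatibility of the $2g-2+n$ induction with the boundary strata of $\overline{\mathcal{M}}_{g,n}$) is flagged as remaining to be done. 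Since the paper leaves the statement as a conjecture, there is no proof to match; what you have supplied is a plausible strategy consistent with the paper's evidence, not a demonstration.
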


\begin{rem}
Note that we did not define open Gromov-Witten invariants in this paper; henceforth we will concentrate on the first part of the conjecture, involving the $F_g$ and the standard Gromov-Witten invariants.
\end{rem}

The conjecture has been tested computationally for many toric geometries (see for instance \cite{Bouchard:2009, Bouchard:2010, Brini:2009, Marino:2008}). For the simplest case $X=\mathbb{C}^3$, it has been proved that the correlation functions obtained through the recursion reproduce the topological vertex calculation \cite{Chen:2009, Zhou:2009, Zhou:2009ii}. A proof for the resolved conifold is provided in \cite{Eynard:2008ii} (although it does not address the constant terms). A general proof of the conjecture (aside from constant terms again) was outlined in \cite{Eynard:2010, Eynard:2010ii}, although a few gaps remain. The special case of Hurwitz numbers \cite{Bouchard:2009ii}, which can be seen as a consequence of the remodeling conjecture, has been proved in \cite{Borot:2009, Eynard:2009}.

\subsubsection{Constant maps}

\label{s:question}

However, one subtle point was not addressed carefully in the original conjecture and subsequent work. According to the remodeling conjecture, the free energies $F_g$ should be mapped to the generating functions of Gromov-Witten invariants. As we saw in subsection \ref{s:GW}, the leading term in the generating functions of Gromov-Witten invariants $F_g$ correspond to the contribution from constant maps with $\beta=0$, $N_{g,0}$, see \eqref{eq:constant}. Then one could ask:
\begin{question}
Is the remodeling conjecture true for the whole free energies $F_g$, including the constant terms, or just for the reduced free energies $F_g^{\beta \neq 0}$? In other words, are the $F_g$ constructed from the Eynard-Orantin recursion, when we set the parameters $r_i=0$, giving the same constant terms $N_{g,0}$ as in Gromov-Witten theory?
\end{question}

As far as we are aware, all the checks and proofs of the remodeling conjecture so far only involved non-constant terms, since they compared with the topological vertex, which only computes the reduced Gromov-Witten theory. In this paper we argue that the answer to the above question is a surprising yes:  \emph{the $F_g$ constructed by the recursion, miraculously, compute precisely the constant terms in Gromov-Witten theory}! This is surprising for various reasons, as mentioned in the introduction. But interestingly, it turns out that it is due to the particular geometry of mirror curves. Let us now see how it goes.

\section{Contributions for constant maps}

\subsection{The role of the ramification points}

\label{s:bpoints}

Before we address the computation of constant terms for general toric Calabi-Yau threefolds, we need to understand the geometry of mirror curves a little better, and the relation between this geometry and the form of the Eynard-Orantin recursion. One particular thing which is rather puzzling is the role of the ramification points of the $x$-map. Those are fundamental from the recursion point of view, but seem to come out of nowhere from a mirror symmetry standpoint. Let us now see how they can interpreted geometrically.

Recall that a mirror curve $\Sigma$ is an algebraic curve in $(\mathbb{C}^*)^2$; its associated Riemann surface has a certain number of punctures.
\begin{lem}\label{l:ram}
Let $\Sigma$ be a curve mirror to a toric Calabi-Yau threefold $X$. Let $n$ be the number of ramification points (which are all simple) of the $x$-map $x: \Sigma \to \mathbb{C}^*$. Assume that the $x$-map is a branched covering. Then
\begin{equation}
n = \chi(X) =  \text{number of torus fixed points of $X$}.
\end{equation}
In particular, the number of ramification points does not depend on the degree of the $x$-map.
\end{lem}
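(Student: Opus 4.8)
The plan is to compute $n$ by applying the Riemann--Hurwitz formula to the $x$-map on the compactified curve, and then to match the answer against $\chi(X)$ computed combinatorially from the Newton polygon of $\Sigma$. I would work throughout with a generic choice of framing, so that the defining Laurent polynomial $H$ is nondegenerate with respect to its Newton polygon $\Delta$; this is what the branched-covering hypothesis amounts to, and it guarantees that the compactification $\hat\Sigma$ of $\Sigma$ inside the toric surface $\mathbb{P}_\Delta$ is smooth and meets the boundary divisors transversally. Under this nondegeneracy two standard facts about curves in toric surfaces are available: the genus $g_\Sigma$ of $\hat\Sigma$ equals the number $I$ of interior lattice points of $\Delta$, and the number $P$ of punctures of $\Sigma$ equals the lattice perimeter $B$ of $\partial\Delta$, each primitive boundary segment contributing one puncture.

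First I would set up Riemann--Hurwitz for the degree-$d$ map $x:\hat\Sigma\to\mathbb{P}^1$. Splitting the total ramification into the interior contribution $n$ (the simple ramification points lying over $\mathbb{C}^*$) plus the contributions over $0$ and $\infty$, and using $\sum_{p\in x^{-1}(0)}e_p=\sum_{p\in x^{-1}(\infty)}e_p=d$, the degree $d$ cancels and one obtains
\begin{equation}
n = 2g_\Sigma - 2 + \lvert x^{-1}(0)\rvert + \lvert x^{-1}(\infty)\rvert .
\end{equation}
This already exhibits the asserted independence of $n$ from the degree of the $x$-map.

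The crux of the argument --- and the step I expect to be the main obstacle --- is the evaluation of $\lvert x^{-1}(0)\rvert+\lvert x^{-1}(\infty)\rvert$. Since $x\in\mathbb{C}^*$ on $\Sigma$ itself, every preimage of $0$ or $\infty$ is a puncture, so this count is the number of punctures at which $x\to 0$ or $x\to\infty$. Near the puncture associated to a boundary edge $e$ of $\Delta$, the order of vanishing of $x$ along the corresponding boundary divisor $D_e$ is $\langle(1,0),u_e\rangle$, where $u_e$ is the primitive normal to $e$; hence a puncture lies over a finite nonzero value of $x$ precisely when $e$ is horizontal. The framing reparameterization $X=xy^f$, $Y=y$ acts on exponents by the unimodular shear $(i,j)\mapsto(i,j-fi)$, and it makes any given edge horizontal only for a single exceptional value of $f$. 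Therefore, for generic framing $\Delta$ has no horizontal edge, every puncture lies over $0$ or $\infty$, and $\lvert x^{-1}(0)\rvert+\lvert x^{-1}(\infty)\rvert=P$. This is exactly the mechanism behind the \emph{bad} choices of framing flagged in the paper, for which punctures escape to finite $x$-values and the count fails. Substituting gives $n=2g_\Sigma-2+P$.

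Finally I would identify $2g_\Sigma-2+P$ with $\chi(X)$. Since $\chi(X)$ equals the number of torus fixed points, i.e. the number of trivalent vertices of the toric diagram, and this diagram is dual to a unimodular triangulation of $\Delta$, we have that $\chi(X)$ is the number of triangles, namely $\chi(X)=2\,\mathrm{Area}(\Delta)$. Pick's theorem gives $2\,\mathrm{Area}(\Delta)=2I+B-2$, and combining with $g_\Sigma=I$ and $P=B$ yields $2g_\Sigma-2+P=\chi(X)$, which completes the proof. Equivalently, one can run this last step directly through the pair of pants decomposition: gluing $\chi(X)$ thrice-punctured spheres along circles gives $\chi(\Sigma)=-\chi(X)$, i.e. $\chi(X)=2g_\Sigma+P-2$, matching the Riemann--Hurwitz count and making transparent the promised one-to-one correspondence between pairs of pants, torus fixed points, and ramification points.
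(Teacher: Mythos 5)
Your proof is correct, and it reaches the same two intermediate facts as the paper's --- namely $n=-\chi(\Sigma)$ from Riemann--Hurwitz and then $-\chi(\Sigma)=\chi(X)$ --- but by a genuinely different and more explicit route. The paper applies Riemann--Hurwitz in its noncompact form directly to $x:\Sigma\to\mathbb{C}^*$, where $\chi(\mathbb{C}^*)=0$ makes the degree drop out in one line, and then identifies $\chi(\Sigma)=-\chi(X)$ by the pictorial pair of pants decomposition (each pair of pants corresponds to a trivalent vertex of the toric diagram under the ``fattening'' of the toric diagram). You instead compactify, run Riemann--Hurwitz on $x:\hat\Sigma\to\mathbb{P}^1$, and account for the ramification hidden at the punctures by hand; this forces you to prove that for generic framing every puncture sits over $0$ or $\infty$, which is precisely the content of the paper's \emph{second} lemma (that the $x$-map is a branched covering for generic $f$) --- the paper simply absorbs this into the hypothesis of the present lemma, whereas you derive it from the Newton polygon shear $(i,j)\mapsto(i,j-fi)$. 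Your closing step via $g_\Sigma=I$, $P=B$ and Pick's theorem $2\,\mathrm{Area}(\Delta)=2I+B-2$ replaces the paper's pair of pants argument with a rigorous lattice-point count, at the cost of invoking the nondegeneracy/smooth-compactification facts for curves in toric surfaces. What your version buys is precision: it makes explicit both where the branched-covering hypothesis is used and why $\chi(X)$ equals the number of triangles; what the paper's version buys is brevity and the conceptual link to the pair of pants decomposition that drives the rest of the paper (a link you recover in your final remark).
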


\begin{proof}
The lemma is a consequence of the Riemann-Hurwitz formula. Recall that given a branched covering $f: X \to Y$ between (not necessarily compact) Riemann surfaces, the Riemann-Hurwitz formula tells us that
\begin{equation}
\chi(X) = d \chi(Y) - b,
\end{equation}
where $d$ is the degree of $f$, $\chi(X)$ and $\chi(Y)$ are the topological Euler characteristics, and $b$ is the branching index. In our case, assuming that the map $x$ is a branched covering and that its ramification locus is composed of $n$ simple ramification points, we get
\begin{equation}
\chi(\Sigma) = d \chi(\mathbb{C}^*) - n.
\end{equation}
But $\chi(\mathbb{C}^*) = 2 - 2 g(\mathbb{C}^*) - 2 = 0$, hence
\begin{equation}
n = - \chi(\Sigma).
\end{equation}
 In particular, $n$ does not depend on the degree $d$ of the $x$-map.

Now any $\Sigma$ has a pair of pants decomposition, and $\chi(\Sigma)$ is equal to the number of pair of pants times the Euler characteristic of a pair of pants (a thrice punctured sphere), which is $-1$. But recall that $\Sigma$ can be seen as fattening the toric diagram of $X$ (see figure \ref{fig-toric}), hence each pair of pants correspond to a trivalent vertex of the toric diagram. Therefore, if we let $V$ be the number of vertices in the toric diagram of $X$, we obtain $\chi(\Sigma) = - V$. Hence $n=V =  \text{number of torus fixed points of $X$} = \chi(X)$.
\end{proof}

\begin{rem}
Note that this result is special to the case of curves in $(\mathbb{C}^*)^2$. For affine curves in $\mathbb{C}^2$ as originally considered by Eynard and Orantin, the number of ramification points of course depends on the degree of the map, as can be seen by applying Riemann-Hurwitz to that case.
\end{rem}

So what we have found is that the ramification points of the $x$-map are in one-to-one correspondence with the torus fixed points of $X$. Hence, taking residues at the ramification points in the $x$-map seems to be a direct B-model mirror the A-model localization of Gromov-Witten invariants with respect to the torus action. It would be very nice to make this analogy more precise.

Moreover, what we have seen is that the ramification points are also in one-to-one correspondence with the pair of pants in the decomposition of the Riemann surface $\Sigma$; there is one ramification point on each pair of pants. Therefore, it seems that there is a pair of pant decomposition built in the residue process of the topological recursion. At each step of the recursion, we are summing over residues at the ramification points; in other words, we are summing over contributions from each pair of pants. This also has an analog in the A-model; by the fattening prescription, the pair of pants decomposition of $\Sigma$ is equivalent to breaking the toric diagram along its internal edges. But in Gromov-Witten theory the topological vertex theory is precisely built by gluing $\mathbb{C}^3$ patches, which correspond to decomposing the toric diagram along internal edges. Hence the residue process can be seen as an analog decomposition on the B-model side, where  the $\mathbb{C}^3$ patch decomposition of $X$ is replaced by a pair of pant decomposition of $\Sigma$. We will make this remark more precise for constant terms in the following.

Finally, note that there is a very important assumption included in the statement of the lemma, which is that the $x$-map has to be a branched covering; otherwise one cannot use Riemann-Hurwitz directly. But is this assumption satisfied for mirror curves? Interestingly, we can show that it is satisfied for framed mirror curves $\Sigma_f$, but only \emph{for a generic choice of framing $f$}, as we prove below. However, there may exist a finite number of values of $f$ for which it is not satisfied. For these pathological choices of framing, what happens is that the map fails to be a branched covering at a finite number of points in $\mathbb{C}^*$ (for instance it may not be surjective). We can still apply Riemann-Hurwitz, but we first need to take care of these points either by removing them or plugging in some punctures in $\Sigma$; the result is then that the number of ramification points of the $x$-map becomes less than $\chi(X)$.

As a consequence, it turns out that, as we will see, for these pathological choices of framing the recursion \emph{does not} produce the right free energies $F_g$. This is rather striking, since, as explained in \cite{Bouchard:2009}, the framing transformation is a symplectic transformation (in the sense that it preserves the symplectic form $\frac{\mathrm{d} x}{x} \wedge \frac{\mathrm{d} y}{y}$ on $(\mathbb{C}^*)^2$), hence the $F_g$ should be invariant under framing transformations, according to \cite{Eynard:2007, Eynard:2007ii, Eynard:2008}. What happens is that if we compute the $F_g$ for the framed curve $\Sigma_f$, treating $f$ as a \emph{parameter}, then the resulting $F_g$ \emph{do not} depend on $f$. So they are invariant in this sense, but the calculation must be done treating $f$ as a parameter to get the right answer. As a result, framing becomes an essential part of the calculation, and can be understood as some sort of ``regularization'' procedure for the mirror curve. 

Let us now prove that the assumption is satisfied for framed mirror curves, for a generic choice of framing.

\begin{lem}
Let $\Sigma_f$ be a framed mirror curve. Then, for generic choice of framing $f$, the map $x: \Sigma_f \to \mathbb{C}^*$ is a branched covering. 
\end{lem}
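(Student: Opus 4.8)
The plan is to show that the map $x : \Sigma_f \to \mathbb{C}^*$ fails to be a branched covering only for finitely many values of the framing parameter $f$, so that for generic $f$ it is one. Recall that a holomorphic map between Riemann surfaces is a branched covering precisely when it is proper (equivalently, surjective onto $\mathbb{C}^*$ with no ``escaping'' sheets) and has only finitely many ramification points; since $x$ is a nonconstant meromorphic function on the compactification $\hat\Sigma_f$, properness is the real content. The natural strategy is therefore to compactify: extend $X$ (the framed variable) and $Y$ to meromorphic functions on the compact Riemann surface $\hat\Sigma_f$, and check that the degree of $X : \hat\Sigma_f \to \mathbb{P}^1$ is constant and that no branch point ``runs off to a puncture'' for generic $f$.

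First I would write the framed curve explicitly. Using the reparameterization $Y = y$, $X = x y^f$ from \eqref{def-framing}, the defining polynomial $H(x,y)=0$ becomes $H_f(X,Y) = H(X Y^{-f}, Y) = 0$; clearing the powers of $Y$ turns this into an honest polynomial relation. Solving for $X$ gives $X = X(Y)$ as a rational (in fact here, since the mirror curves are genus-$0$ or low-genus rational-in-$Y$ expressions) function whose fibers over a generic point of $\mathbb{C}^*$ all have the same cardinality $d = \deg x$. The key observation is that the ramification points of $X$ are the zeros of $\mathrm{d}X$ on $\Sigma_f$, and their $X$-images are the critical values; I would argue that both the number of these critical points and the property that none of them degenerates (collides with a puncture, or sends a critical value to $0$ or $\infty$ where a sheet could be lost) are governed by the nonvanishing of certain resultants/discriminants that depend polynomially on $f$.

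The concrete computation is to form the discriminant of $H_f(X,Y)$ with respect to $Y$ (or to compute $\frac{\partial}{\partial Y}$ of the relation and eliminate $Y$): its vanishing locus in $X$ records the branch points. I would show that, as a polynomial identity in the coefficients, the leading behavior is controlled by $f$ only through finitely many exceptional specializations — those $f$ for which the degree of $X(Y)$ drops, or for which a branch point migrates to $X \in \{0,\infty\}$ (the punctures), or for which two punctures of $\Sigma_f$ merge. Each of these is cut out by the vanishing of a nonzero polynomial in $f$ (a resultant), hence happens for only finitely many $f$. For all remaining $f$ the map is proper with simple ramification, i.e. a genuine branched covering, which is exactly the genericity assertion.

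The hard part will be rigorously controlling the behavior \emph{at the punctures} of $\Sigma_f$ as $f$ varies, rather than merely over the interior $\mathbb{C}^*$. This is precisely where the earlier discussion warned that for special (``pathological'') $f$ the map may fail to be surjective or may lose a sheet, dropping the ramification count below $\chi(X)$; so the whole subtlety of the statement lives there. Concretely, I expect the main obstacle to be proving that for generic $f$ no critical value of $X$ escapes to $\{0,\infty\}$ and that the local degree of $X$ at each puncture is stable — i.e. that the Newton-polygon (tropical) data of $H_f$, which encodes the punctures and the asymptotics of $X$ along them, changes only at finitely many $f$. Once this finiteness of exceptional framings is established, combining it with Lemma~\ref{l:ram} immediately yields $n = \chi(X)$ for generic framing, closing the loop with the preceding results.
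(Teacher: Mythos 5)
Your overall strategy---show that the set of framings for which $x$ fails to be a branched covering is finite---is the same as the paper's, and you correctly locate the danger at the punctures (fiber points escaping to $Y=0$ or $Y=\infty$, Newton-polygon degenerations). But the decisive step is exactly the one you defer: you write ``I would show that\ldots'' and ``I expect the main obstacle to be\ldots'' at the point where the actual argument has to happen, so as written this is a plan rather than a proof. Moreover, the discriminant/resultant framing you propose does not quite fit the situation: the exponents of $Y$ in $H_f(X,Y)$ are themselves affine-linear in $f$ (terms like $Y^{n_i - f m_i}$), so there is no single polynomial in $f$ whose vanishing cuts out the bad framings; the degree of the $Y$-polynomial and its monomial support change with $f$, which is precisely what breaks a naive ``nonzero resultant in $f$'' argument. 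There is also a minor conflation in your properness discussion: the relevant failure is not a \emph{critical value} of $X$ escaping to $0$ or $\infty$ in the target, but a point of a \emph{generic fiber} escaping to a puncture $Y\in\{0,\infty\}$ of $\Sigma_f$ while $X$ stays in $\mathbb{C}^*$.

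The paper's resolution is more elementary and closes the gap in one observation. Write the framed curve as
\begin{equation}
1 + X Y^{-f} + Y + \sum_{i=1}^k r_i X^{m_i} Y^{\,n_i - f m_i} = 0 ,
\end{equation}
and view it, for fixed $X\in\mathbb{C}^*$, as a Laurent polynomial in $Y$. Its coefficients are sums of monomials in $X$. The fiber over $X$ drops below $d$ points only if the coefficient of the highest power of $Y$ vanishes (a root goes to $Y=\infty$) or the coefficient of the lowest power vanishes (a root goes to the puncture $Y=0$). A single monomial $c\,X^{m}$ never vanishes on $\mathbb{C}^*$, so this can only happen if two or more terms share the extreme $Y$-degree while carrying different powers of $X$. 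A tie $n_i - f m_i = n_j - f m_j$ with $m_i\neq m_j$ is a nontrivial linear condition on $f$, so only finitely many framings produce such a tie at the top or bottom degree; for all other $f$ every fiber over $\mathbb{C}^*$ has exactly $d$ points and the map is a branched covering. You should replace your resultant step with this degree-tie argument to make the proof complete.
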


\begin{proof}
To construct the mirror curve, we do the reparameterization
\begin{equation}
(x,y) = (X Y^{-f}, Y).
\end{equation}
The unframed curve always has the form
\begin{equation}
1 + x + y + \sum_{i=1}^k r_i x^{m_i} y^{n_i} = 0,
\end{equation}
for some integers $m_i$ and $n_i$. After the framing reparameterization, the curve becomes
\begin{equation}
1 + X Y^{-f} + Y + \sum_{i=1}^k r_i (X Y^{-f})^{m_i} Y^{n_i} = 0.
\end{equation}
Let $d$ be the degree of the map $X : \Sigma_f \to \mathbb{C}^*$. The map will be a branched covering if, after removing the set of branch points in $\mathbb{C}^*$ and the set of ramification points in $\Sigma_f$, the resulting map is a covering. That is, for all $p \in \mathbb{C}^*$ not in the branching locus, $X^{-1}(p) \in \Sigma_f$ consists in exactly $d$ points.

Generically, given a $X$ the curve becomes a degree $d$ polynomial in $Y$, hence the preimage generically consists in $d$ points in $\Sigma_f$. However, the $X$-map may fail to be a branched covering if for a given $f$, there exists a $X \in \mathbb{C}^*$ such that either the term of highest or lowest degree in $Y$ vanishes (in the former case, we lose a point in the preimage, while in the latter case one of the points in the preimage goes to $Y=0$ which is a puncture of $\Sigma_f$). This will happen if there are two highest or lowest degree terms in $Y$ with different powers of $X$. It is easy to enumerate the possible such ``bad'' choices of $f$, which depend on the particular integers $m_i$ and $n_i$. The point is that for any framed mirror curve, there is only a finite number of such bad $f$, hence the map is a branched cover for a generic $f$.
\end{proof}

Let us illustrate the role of framing with a simple example.

\begin{example}
\label{ex:c3framing}
As an example of this issue with framing, consider the framed curve mirror to $\mathbb{C}^3$, \eqref{eq:fmc3}:
\begin{equation}
\Sigma_f = \left \{H(X,Y):= X+ Y^f + Y^{f+1} = 0 \right \} \subset (\mathbb{C}^*)^2.
\end{equation}
The curve $\Sigma_f$ has three punctures. Hence $\chi(\Sigma_f) = 2 - 3 = -1$, and since $\chi(\mathbb{C}^3) = 1$, we indeed have that $\chi(\Sigma_f) = - \chi(\mathbb{C}^3)$.  We thus expect that the $X$-map should have a single ramification point. For generic $f$, the ramification point is given by the unique solution (in $\mathbb{C}^*$) of 
\begin{equation}
\frac{\partial H(X,Y)}{\partial Y} = Y^{f-1} (f + (f+1) Y) = 0,
\end{equation}
that is,
\begin{equation}
Y_* = - \frac{f}{f+1}, \qquad X_* = f^f (-1-f)^{-1-f}.
\end{equation}
However, this analysis fails for the choices of framing $f=0$ and $f=-1$, since the map $X$ is not a branched covering anymore. In the case of $f=0$, the curve becomes
\begin{equation}
X + Y + 1 = 0.   \label{C3-mirror1}
\end{equation}
The point $X=-1 \in \mathbb{C}^*$ has no preimage in $\Sigma_f$ under the $X$-map (the preimage would be $(-1,0)$, which is a puncture of $\Sigma_f$), hence the map is not surjective. Thus it is not a branched covering. It is a branched covering however from $\Sigma_f$ to $\mathbb{C}^* \setminus \{ -1 \}$ (of course, it is one-to-one), hence $X$ has no ramification point.

Similarly, for $f=-1$, the curve becomes
\begin{equation}
X Y + 1 + Y = 0.    \label{C3-mirror2}
\end{equation}
Again, the point $X = -1 \in \mathbb{C}^*$ has no preimage in $\Sigma_f$. So we lose a puncture again, and the $X$-map has no ramification point.

It turns out that the mirror $\mathbb{C}^3$ curves are most often encountered in the literature in the form (\ref{C3-mirror1}) or (\ref{C3-mirror2}), which correspond to these bad choices of framing. Similarly, typical representations of the conifold curve (\ref{curveConifold}), or other curves considered in the literature, often correspond to pathological choices of framing. As we discuss in sec. \ref{ss-symplectic}, for such bad choices of framing the topological recursion does not produce the correct free energies.
\end{example}

Now that we understand the role of the ramification points and the relation with the pair of pants decomposition of the mirror curve, let us move on to the study of constant terms.

\subsection{The simplest toric Calabi-Yau threefold: $\mathbb{C}^3$}

\label{s:c3}

We start with the simplest toric Calabi-Yau threefold, $X = \mathbb{C}^3$. In this case, the only contributions in Gromov-Witten theory are given by constant maps. From a mirror symmetry point of view, there is no K{\"a}hler parameters, hence there is no $r$-parameter in the mirror curve (that is, the mirror curve is really a curve, not a family of curves). So the question in subsection \ref{s:question} translates in this case into the question whether the $F_g$ computed by the recursion give precisely the genus $g$ Gromov-Witten invariants $N_{g,0}$ for constant maps to $\mathbb{C}^3$.

\subsubsection{Calculation of the low genus invariants}

Let us apply the topological recursion to the mirror curve of $\mathbb{C}^3$. As explained above, we need to used the regularized (or framed) mirror curve, given by \eqref{eq:fmc3}.
\be
\Sigma^{\mathbb{C}^3} = \{ H(X,Y) := X + Y^f + Y^{f+1} = 0 \}.   \label{HxyC3}
\ee
From (\ref{HxyC3}) we determine a dependence
\be
Y(X) = -1 + \sum_{k=1}^{\infty} (-1)^{k(f+1)} \frac{(k f + k - 2)!}{(k f - 1)! k!} X^k     \label{yx}
\ee

Solving the equation $\frac{\partial H(X,Y)}{\partial Y} = 0$ we find a single ramification point at:
\be
X_* = f^f (-1-f)^{-1-f},\qquad Y_* = - \frac{f}{1+f}.   \label{ystarC3}       
\ee

Let us introduce a local coordinate $p$ in the neighborhood of the branch point $Y_*$
\be
Y = Y_* + p,    \label{ystarp}
\ee
The conjugate point $\overline{Y} := s_*(Y)$ obtained from the deck transformation near the ramification point $a_*$, that is $X(Y)=X(\overline{Y})$, can be found in a series expansion
\begin{equation}
\overline{Y} = Y_* -p + \frac{2(f^2-1)p^2}{3f} -  \frac{4(f^2-1)^2 p^3}{9f^2} + \frac{2(1+f)^3 (-22 + 57 f - 57 f^2 + 22 f^3) p^4}{135 f^3} + \mathcal{O}(p^5).
\end{equation}
The primitive $\Phi(p)$ of the one-form $\omega = \log Y \frac{\mathrm{d} X}{ X}$, as defined in \eqref{eq:primitive}, can be determined in the exact form
\begin{equation}
\Phi(p) = \frac{f}{2} \Big(\log\big(p - \frac{f}{1+f}\big)\Big)^2 +  \log\Big(p - \frac{f}{1+f}\Big) \log\Big(\frac{1+p+fp}{1+f}\Big) + 
 \textrm{Li}_2\Big( \frac{f - p - f p}{1 + f}\Big).
\end{equation}
The curve (\ref{HxyC3}) has genus zero and so the fundamental normalized bi-differential is simply the Cauchy differentiation kernel \eqref{eq:cauchy}. The Eynard kernel \eqref{eq:eynard} is found in a series expansion as
\bea
K(p,q) & = & \Big(-\frac{f^2}{2 (1 + f)^4 p^2\,q} - \frac{f (f-1)}{2 (1 + f)^3 p^2} +    \label{kernel-q} \\
& & + \frac{f \big( 4(1+f)^2p^2 + 2(1-f^2)p -3f \big)\, q}{ 6 (1 + f)^4 p^4} + \mathcal{O}(q^2) \Big) \frac{\mathrm{d} p}{\mathrm{d} q}.
\eea

With all the above ingredients we can follow the recursion procedure. The first correlators which we find read
\begin{align}
W^1_1(p) =& -\frac{p^2+f^4 p^2+2 f^3 p (-1+2 p)+2 f p (1+2 p)+f^2 (-3+6 p^2)}{24(1+f)^4 p^4} \mathrm{d}p, \\
W^0_3(p_1,p_2,p_3) =& \frac{f^2}{(1+f)^4 p_1^2 p_2^2 p_3^2} \mathrm{d} p_1 \mathrm{d} p_2 \mathrm{d} p_3,
\end{align}
and the others are more complicated. Having found $W^2_1(p)$ we determine
\be
F_2 = -\frac{1}{2} \textrm{Res}_{p\to 0} \Phi(p) W^2_1(p) = \frac{1}{5760}.
\ee
Similarly, having found $W^3_1(p)$, we determine
\be
F_3 = \frac{1}{4} \textrm{Res}_{p\to 0} \Phi(p) W^3_1(p) = -\frac{1}{1451520}.
\ee
Continuing calculations (restricting to fixed framing $f=2$ for higher genera) we find
\be
F_4 =  \frac{1}{87091200} , \qquad \qquad F_5 = -\frac{1}{2554675200}  ,
\ee
\be
F_6 = \frac{691}{31384184832000} , \qquad \qquad F_7 = -\frac{691}{376610217984000} .
\ee

\subsubsection{Comparison and conjecture}

As we saw in subsection \ref{s:GW}, the free energies for $X = \mathbb{C}^3$, which are just given by the constant contributions $F_g = N_{g,0}$, satisfy:
\be
F_g = \frac{1}{2} (-1)^g \times \frac{|B_{2g}| |B_{2g-2}|}{2g(2g-2)(2g-2)!},
\ee
for $g \geq 2$. One can immediately check that the $F_g$ that we obtained above agree with this general formula. 

From these calculations we propose:
\begin{conjecture}\label{c:c3}
Let $\Sigma_f$ be the framed curve mirror to $X = \mathbb{C}^3$. Then the free energies obtained through the Eynard-Orantin recursion are given by:
\be
F_g = \frac{1}{2}(-1)^g \times \frac{|B_{2g}| |B_{2g-2}|}{2g(2g-2)(2g-2)!}.
\ee
\end{conjecture}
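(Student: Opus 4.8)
The plan is to prove the identity by recognizing the Eynard--Orantin free energy $F_g$ of $\Sigma^{\mathbb{C}^3}$ as a triple Hodge integral and then quoting its Faber--Pandharipande evaluation. By Lemma~\ref{l:ram} the curve \eqref{HxyC3} has genus zero and a single simple ramification point, so the bi-differential is the Cauchy kernel \eqref{eq:cauchy} and the recursion of Definition~\ref{d:recursion} is governed entirely by the local behaviour of $\omega=\log Y\,\frac{\mathrm dX}{X}$ near \eqref{ystarC3}. In particular $F_g$ is a function only of the Taylor coefficients of $y=\log Y$ in the local coordinate $\zeta$ with $x-x_\ast\propto\zeta^2$ (where $x=\log X$), and, as argued in section~\ref{s:bpoints}, it is independent of $f$ as long as the single ramification point persists; we may therefore keep $f$ as a free parameter throughout, exactly as in the explicit computation of $F_2,\dots,F_7$ above.

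The first substantive input is the matching of the correlators with the topological vertex. It is proved in \cite{Chen:2009, Zhou:2009, Zhou:2009ii} that the $W_n^g$ built by the recursion on $\Sigma^{\mathbb{C}^3}$ reproduce the framed one-leg topological vertex. By the Mari\~no--Vafa formula the latter is the generating series of one-holed Hodge integrals of the form
\[
\int_{\overline{\mathcal M}_{g,1}}\frac{\Lambda_g^\vee(1)\,\Lambda_g^\vee(f)\,\Lambda_g^\vee(-1-f)}{1-w\,\psi_1},
\qquad
\Lambda_g^\vee(t)=\sum_{i=0}^g(-1)^i\lambda_i\,t^{\,g-i},
\]
with the framing $f$ entering as the Mari\~no--Vafa parameter and $w$ a variable conjugate to the puncture. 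Thus $W_1^g$ is, after the change of variables to $\zeta$, the differential encoding these one-point integrals.

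The second step is to close the puncture. I would insert the known $W_1^g$ into Definition~\ref{d:fg},
\[
F_g=\frac{(-1)^g}{2-2g}\,\underset{q=a_\ast}{\mathrm{Res}}\;\Phi(q)\,W_1^g(q),
\]
and show that the residue against the primitive $\Phi$ of $\omega$ implements precisely the geometric operation of forgetting the marked point, converting the one-point series into the closed integral $\int_{\overline{\mathcal M}_g}\Lambda_g^\vee(1)\,\Lambda_g^\vee(f)\,\Lambda_g^\vee(-1-f)$. The crucial simplification is that the three framing arguments satisfy the Calabi--Yau relation $1+f+(-1-f)=0$; this is the algebraic origin of the $f$-independence observed above, since under this constraint the degree-$(3g-3)$ part of the integrand reduces, by the Mumford relations and $\lambda_g^2=0$, to $\lambda_{g-1}^3$ (up to sign and an $f$-independent constant). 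Finally I would quote the Faber--Pandharipande evaluation \cite{Faber:2000}
\[
\int_{\overline{\mathcal M}_g}\lambda_{g-1}^3=\frac{|B_{2g}|}{2g}\,\frac{|B_{2g-2}|}{2g-2}\,\frac{1}{(2g-2)!},
\]
and assemble the overall $\tfrac12$ and the sign $(-1)^g$ (the latter consistent with the $\hbar=-\mathrm i\lambda$ convention of the Remark after Definition~\ref{d:fg}) to recover the claimed formula.

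The main obstacle is the closing step: proving rigorously that the residue of $\Phi\,W_1^g$ at the ramification point equals the closed Hodge integral, i.e.\ that the analytic ``dilaton-type'' operation built into Definition~\ref{d:fg} coincides with the geometric forgetting of the marked point in the Mari\~no--Vafa generating series. This demands precise control of the change of variables relating the local coordinate $\zeta$, the logarithmic one-form $\omega=\log Y\,\frac{\mathrm dX}{X}$ (rather than the affine $y\,\mathrm dx$), and the Mari\~no--Vafa variables, together with careful bookkeeping of the normalization constants that ultimately produce the Bernoulli numbers. Establishing the collapse to $\lambda_{g-1}^3$ under the Calabi--Yau framing relation---and thereby the $f$-independence---is where essentially all of the work resides.
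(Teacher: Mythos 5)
You should first note that the paper does not actually prove this statement: it is stated as Conjecture~2 precisely because the authors cannot prove it. Their only evidence is the explicit low-genus computation that precedes it --- setting up the framed curve $X+Y^f+Y^{f+1}=0$, locating the single ramification point, expanding the deck transformation, the kernel and the primitive $\Phi$ in the local coordinate $p$, and running the recursion to obtain $F_2,\dots,F_7$, which are then matched term by term against the Faber--Pandharipande numbers. So there is no ``paper's proof'' for your argument to be measured against; the relevant comparison is whether your outline closes the gap the authors themselves acknowledge.

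It does not. Your strategy --- identify the $W_n^g$ with the framed one-leg vertex via Chen--Zhou, rewrite it through the Mari\~no--Vafa formula as a generating series of one-pointed Hodge integrals with weights $(1,f,-1-f)$, and then argue that the residue in Definition~\ref{d:fg} ``closes the puncture'' to yield $\int_{\overline{\mathcal M}_g}\lambda_{g-1}^3$ --- is a sensible route, and indeed essentially the one the authors gesture at when they remark that Eynard's results on intersection numbers of spectral curves might yield a proof. The reduction of the degree-$(3g-3)$ part of $\Lambda_g^\vee(1)\Lambda_g^\vee(f)\Lambda_g^\vee(-1-f)$ to a multiple of $\lambda_{g-1}^3$ under the Calabi--Yau weight condition is standard, and the Faber--Pandharipande evaluation is a legitimate quotation. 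But the step you yourself flag as ``where essentially all of the work resides'' --- proving that $\frac{(-1)^g}{2-2g}\,\mathrm{Res}_{q=a_*}\,\Phi(q)W_1^g(q)$, with $\Phi$ the primitive of the logarithmic form $\log Y\,\mathrm dX/X$, equals the closed Hodge integral obtained by forgetting the marked point in the Mari\~no--Vafa series --- is exactly the content of the conjecture, not a technicality. The dilaton-type identity relating $F_g$ to $W_1^g$ is proved in the literature only for the affine form $y\,\mathrm dx$ and does not automatically transfer to the $(\mathbb{C}^*)^2$ setting, where the very subtleties the paper documents (loss of ramification points at bad framings, failure of the limit to commute with the recursion, failure of naive symplectic invariance) live. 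Without an actual argument for that step, including control of the normalization constants that produce the Bernoulli numbers and the overall $\tfrac12(-1)^g$, what you have is a plausible reduction of the conjecture to another unproven identity, not a proof.
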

This is highly non-trivial. We cannot prove this conjecture at the moment, but we hope to report on it. If the conjecture is true, it would complete the proof of the full remodeling conjecture for the case of $X=\mathbb{C}^3$; the second part of the conjecture, dealing with the correlation functions, has already been proved by Chen and Zhou \cite{Chen:2009, Zhou:2009}. Note that it may be possible to prove this conjecture using very recent results of Eynard \cite{Eynard:2011}.

\subsection{Contributions for constant maps for all toric Calabi-Yau threefolds}

In this section we argue that for a general toric Calabi-Yau threefold $X$, 
the constant terms $N_{g,0}$ are equal to $n$ times the $F_g$ of the vertex, 
where $n$ is the number of ramification points. 
In other words, we conjecture that:
\begin{conjecture}\label{c:X}
Let $X$ be a toric Calabi-Yau threefold, and $\Sigma_f$ its framed mirror curve. Let $F_g^X$ be the free energies computed from the Eynard-Orantin recursion applied to $\Sigma_f$. Then
\begin{equation}
\lim_{r_i \to 0} F_g^X = n F_g^{\mathbb{C}^3},   \label{conjecture3}
\end{equation}
where the $F_g^{\mathbb{C}^3}$ are the free energies computed from the curve mirror to $\mathbb{C}^3$.
\end{conjecture}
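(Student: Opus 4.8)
The plan is to show that as $r_i \to 0$ the framed mirror curve $\Sigma_f$ degenerates into $n$ mutually decoupled copies of the $\mathbb{C}^3$ vertex---one per pair of pants, equivalently per ramification point---and that the Eynard--Orantin recursion factorizes accordingly, so that each ramification point contributes exactly one copy of $F_g^{\mathbb{C}^3}$. The first step is geometric. By Lemma \ref{l:ram}, for generic framing there are exactly $n = \chi(X)$ simple ramification points, one on each pair of pants. Sending $r_i \to 0$ contracts the internal edges of the toric diagram and pinches the cycles separating the pairs of pants. The subtlety is that the naive limit of the defining equation collapses several vertices at once; each ramification point survives only after an $r_i$-dependent $\mathbb{C}^*$ rescaling of $(x,y)$ adapted to its vertex. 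I would show that under such a rescaling the local equation near each $a_\lambda$ converges to a framed $\mathbb{C}^3$ curve \eqref{eq:fmc3}, so that the local coordinate, the deck transformation $s_\lambda$, and the expansion of $\omega = \log y\,\tfrac{\mathrm{d}x}{x}$ all match the $\mathbb{C}^3$ data up to constant shifts of $\log x,\log y$ that do not affect residues. Since $F_g^{\mathbb{C}^3}$ is framing-independent, the possibly different effective framings at the various vertices are harmless.

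The decisive ingredient is the fundamental bidifferential $W_2^0$, the only global input to the recursion. I would prove that in the limit $W_2^0$ block-diagonalizes: restricted to a single pair of pants it converges to the Cauchy kernel \eqref{eq:cauchy}, while its value for arguments on distinct pairs of pants tends to $0$. Granting this, I would run an induction on $2g-2+n$ whose hypothesis is that a correlator with all arguments near a single $a_\lambda$ converges to the corresponding $\mathbb{C}^3$ correlator, while any correlator with arguments spread over distinct ramification points vanishes. The mechanism is that the Eynard kernel $K_\mu(p_0,q)$, whose numerator is $\int_q^{s_\mu(q)} W_2^0(p_0,q')$, vanishes in the limit whenever $p_0$ lies on a pair of pants different from $a_\mu$; thus the location of $p_0$ selects a single surviving term in the sum over ramification points, and mixed inner correlators are killed by the inductive hypothesis. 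Finally, Definition \ref{d:fg} gives $F_g^X = \frac{(-1)^g}{2-2g}\sum_{\lambda=1}^n \underset{q=a_\lambda}{\text{Res}}\,\Phi(q)\,W_1^g(q)$, which in the limit becomes a sum of $n$ identical local terms, each equal to $F_g^{\mathbb{C}^3}$, establishing \eqref{conjecture3}.

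The hard part will be the degeneration of $W_2^0$, which is fixed nonlocally by the vanishing of its $A$-periods. For genus-zero mirror curves such as the resolved conifold \eqref{curveConifold} there are no $A$-cycles, $W_2^0$ is an explicit rational kernel, and the block-diagonalization is a direct computation---this is precisely the case the paper proves. For higher genus (local $\mathbb{P}^2$, \eqref{curveP2}) certain $A$-cycles pinch as $r_i \to 0$, and one must show the period matrix degenerates so that $W_2^0$ splits into Cauchy kernels on the separate components, with no residual cross-terms surviving the residue operations and nothing picked up from the pinching necks. Obtaining this uniform analytic control is the principal obstacle, and is presumably why only the conifold case is established while the general statement is left as a conjecture; Eynard's local-invariant decomposition of the $F_g$ \cite{Eynard:2011} appears to be the natural tool for making the decoupling rigorous.
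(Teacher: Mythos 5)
First, note what the paper actually establishes: the statement you are proving is presented there as a \emph{conjecture}, checked computationally at low genus, supported by a heuristic decoupling argument, and \emph{proved only for the resolved conifold} (Lemma \ref{l:conifold}). Your overall picture --- rescale $(x,y)$ vertex by vertex so that as $r_i\to 0$ each ramification point survives with a local framed $\mathbb{C}^3$ curve --- is essentially the paper's heuristic argument. But where you try to upgrade it to a proof, the route diverges sharply from the one case the paper does prove. For the conifold the paper never factorizes the recursion: it exhibits the exact involution $t\mapsto a_0a_1/t=r/t$, which exchanges the two ramification points, preserves the symplectic form $\tfrac{\mathrm{d}X}{X}\wedge\tfrac{\mathrm{d}Y}{Y}$ and hence the entire recursion data, so that $W_1^g(t)=W_1^g(r/t)$ and $\Phi$ shifts only by a constant plus $\log(r)\log X$, whose residue against $W_1^g$ vanishes. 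This shows the two residues in Definition \ref{d:fg} are \emph{exactly equal for all} $r$, whence $F_g=2\,\mathrm{Res}_{a_0}$, and only then is the $r\to 0$ limit of a single residue taken. No decoupling of cross-terms is needed, and indeed none occurs: at finite $r$ the recursion genuinely mixes the two ramification points.

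The genuine gap in your argument is the central claim that $W_2^0$ block-diagonalizes and that the Eynard kernel $K_\mu(p_0,q)$ tends to zero when $p_0$ sits on a different pair of pants from $a_\mu$. For a genus-zero mirror curve $W_2^0$ is the global Cauchy kernel \eqref{eq:cauchy}; with $p_0$ near $a_0=-f/(f+1)$ and $q$ near $a_1\to 0$, the kernel $1/(z_1-z_2)^2$ stays finite and nonzero, so the asserted vanishing fails in the natural coordinate and can only be salvaged, if at all, by a careful specification of the $r_i$-dependent rescalings at \emph{both} arguments simultaneously --- which you do not supply, and which is precisely where the difficulty lies. Consequently your induction hypothesis (mixed correlators vanish in the limit) is unsupported; the conifold computation suggests the truth is rather that the mixed contributions survive but reorganize, by symmetry or otherwise, into equal per-vertex pieces. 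Your proposal correctly identifies the degeneration of $W_2^0$ at higher genus (pinching $A$-cycles) as an obstacle, but the obstacle is already present at genus zero in the form just described; as written the argument does not close even for the conifold, let alone in general.
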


Thus, for constant terms we obtain a pair of pants decomposition of the free energies, where each pair of pants contributes a single copy of $F_g^{\mathbb{C}^3}$.

Note that conjecture \ref{c:c3} and \ref{c:X} together imply that for all toric Calabi-Yau threefolds $X$, the constant terms in the $F_g$ computed by the recursion are the same ones as in Gromov-Witten theory (refer to  subsection \ref{s:known}).

We checked this conjecture computationally, at low genus, in various toric geometries.
In what follows we will also give a general argument supporting this conjecture, and then prove it in the simple case when $X$ is the resolved conifold. We plan to report on a proof for arbitrary toric geometry soon.

\subsubsection{Ramification points vs. $\mathbb{C}^3$ patches -- examples} 

In this subsection we illustrate, firstly, a correspondence between ramification points and torus fixed points (or equivalently toric vertices used to built a toric diagram). Secondly, we argue that, as a consequence of this fact, each $\mathbb{C}^3$ patch -- or pair of pants -- contributes a factor of $F_g^{\mathbb{C}^3}$ to the constant part of the free energy.

We recall that a toric diagram is dual to the Newton polygon, which can be read off from the equation of the mirror curve. After appropriate triangulation of the Newton polygon, it consists of triangular patches which are in one-to-one correspondence with trivalent vertices of the toric diagram. From lemma \ref{l:ram} proved in section \ref{s:bpoints}, these patches and vertices are also in one-to-one correspondence with ramification points of the mirror curve (for generic framing). Below we demonstrate how, upon relevant rescalings of variables in the mirror curve equation, and taking infinite limit of K{\"a}hler parameters, one can get rid of all but one ramification points, and reduce the curve to the local neighbourhood of such a single ramification point of his or her choice. In this limit the remnant of the original curve is just the mirror $\mathbb{C}^3$ curve, whose ramification point can be identified with the single ramification point which survived the above limit. This is similar to the subtlety described after lemma \ref{l:ram} in section \ref{s:bpoints}; however then losing a ramification point was a feature of a pathological choice of framing, while presently it is a natural consequence of the limit which we consider, and the fact that in this limit the corresponding parameters $Q$ are set to zero in the mirror curve equation. 

Let us consider now the behavior of constant contributions $Z^{\beta=0}$ to the partition function in the above limit. By definition these contributions do not change when parameters $Q$ are varied. However they will change discontinuousely when, in the above decoupling limit, some parameters $Q$ are set strictly to zero, so that we lose corresponding ramification points. As we argue below, we can take such a decoupling limit in various ways, in order to focus on local neighoburhood of a chosen ramification point. As each such neighbourhood can be identified with the mirror $\mathbb{C}^3$ curve, its contribution to the total partition function will be that of $F_g^{\mathbb{C}^3}$. If we take a decoupling limit in which such a ramification point is not lost, it will still contribute $F_g^{\mathbb{C}^3}$. If we take a limit so that we lose this point, the corresponding $F_g^{\mathbb{C}^3}$ factor will drop out of the computation. We can also start with a single $\mathbb{C}^3$ mirror curve, whose free energy is $F_g^{\mathbb{C}^3}$, and consider the opposite process, in which we turn on certain (combination of) $Q$ parameters, and build up more general curve, by bringing in additional ramification points to the picture. Each such new ramification point will contribute $F_g^{\mathbb{C}^3}$ to the free energy of the more general curve. In consequence, if the mirror curve we wish to consider has $n$ ramification points, in total they will contribute $n F_g^{\mathbb{C}^3}$ to the partition function, in agreement with (\ref{conjecture3}). 

Instead of being general, we illustrate now how this mechanism works in some examples. To start with we note that even for a single $\mathbb{C}^3$ patch, whose toric diagram involves a single trivalent vertex and the Newton polygon consists of a single triangular patch, the form of the corresponding single branch point depends on a particular form of the mirror curve. For example, for the curve $H(x,y)=1+x+y=0$ given in (\ref{curveC3}), the Newton polygon and the dual toric diagram are shown in the left panel in fig. \ref{fig-C3}. For another parametrization $H'(x,y)=x+y+x y=0$ the Newton polygon and the toric diagram are shown in the right panel. Regularizing these curves by introducing framing $x\to X Y^f$ we get respectively\footnote{Note that this corresponds to the framing $-f$ in the definition (\ref{def-framing}).}
\be
H_{-f}(X,Y) = 1 + Y + XY^f, \qquad \qquad H'_{-f}(X,Y) = Y + XY^f + XY^{f+1},   \label{HfC3}
\ee
and the corresponding ramification points are characterized by
\be
Y_{*} = \frac{f}{1-f},\qquad \qquad Y'_{*} = \frac{1-f}{f}.   \label{brptsC3}
\ee
These two curves $H=0$ and $H'=0$ are related by a (multiplicative) symplectic transformation $(X,Y)\mapsto(X^{-1},Y^{-1})$. Therefore their free energies should be the same and equal to $F_g^{\mathbb{C}^3}$; we also checked that this is the case by explicit computations.

\begin{figure}[htb]
\begin{center}
\includegraphics[width=0.6\textwidth]{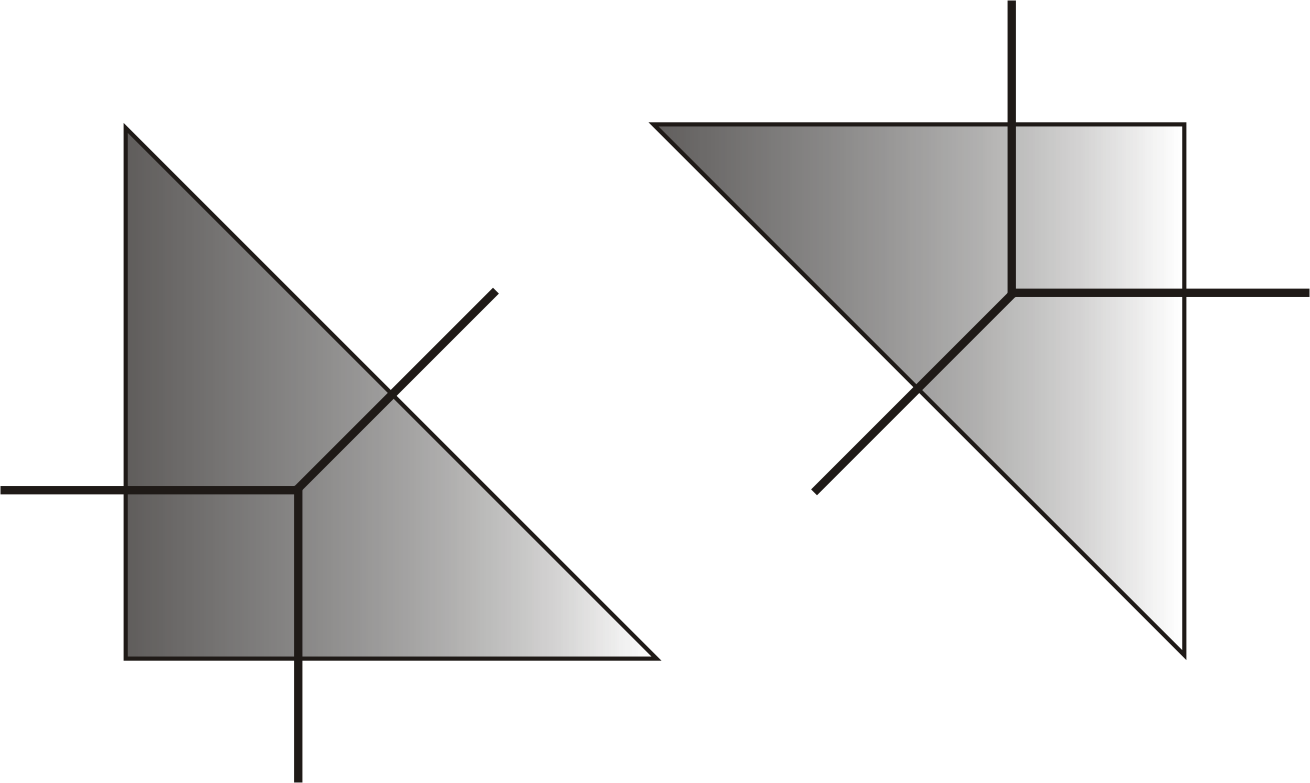} 
\begin{quote}
\caption{\emph{Toric diagrams for $\mathbb{C}^3$, corresponding to $H(x,y)$ and $H'(x,y)$ discussed in the text, involve a single trivalent vertex. There is a single corresponding branch point, whose regularized form, for the two forms of $\mathbb{C}^3$ in our example, is presented in (\ref{brptsC3}).}   } \label{fig-C3}
\end{quote}
\end{center}
\end{figure}

Let us discuss now the behavior of ramification points in case of the conifold. We will also consider two forms of such curve, however related by a very simple symplectic transformation, involving just a multiplication of $X$ or $Y$ by a constant. Therefore, even though the equations for two such curves are different, there is an obvious one-to-one correspondence between their ramification points. To start with, we introduce the notation $X=r x y^{-1}, Y=y$ and $Q=r^{-1}$ for the curve in (\ref{curveConifold}), so that the mirror curve equation takes the form $\widetilde{H}(X,Y)=1 + X + Y + QXY = 0.$ The Newton polygon consists of two triangular patches, and the dual toric diagram involves two trivalent vertices, as shown in fig. \ref{fig-conifold}. Regularizing this curve by introducing framing $X\to XY^f$ we get 
\begin{equation}
\widetilde{H}_{-f}(X,Y) = 1 + Y + X Y^f + Q X Y^{f + 1},
\end{equation}
and there are two ramification points, 
\begin{equation}
\widetilde{Y}_{*,\pm} = \frac{1 - f - Q - f Q \pm \sqrt{-4 f^2 Q + (-1 + f + Q + f Q)^2}}{2 f Q},
\end{equation}
whose expansion in small $Q$ takes form
\be
\widetilde{Y}_{*,+} = \frac{f}{1-f} + \frac{fQ}{(1-f)^3} + \mathcal{O}(Q^2), \qquad 
\widetilde{Y}_{*,-} = \frac{1-f}{f Q} + \frac{1}{f(f-1)} + \mathcal{O}(Q^2). \label{brptsConi}
\ee
We see that for $Q\to 0$, the curve $\widetilde{H}_{-f}(X,Y)$ reduces to the $\mathbb{C}^3$ case $H_{-f}(X,Y)$ in (\ref{HfC3}), and the ramification point $\widetilde{Y}_{*,+}$ reduces to the $\mathbb{C}^3$ ramification point $Y_{*}$ in (\ref{brptsC3}), while $\widetilde{Y}_{*,-}$ runs away from the picture. In this limit only one trivalent vertex survives, and the corresponding dual triangular patch, shown in grey in the left panel in fig. (\ref{fig-conifold}), reduces to the $\mathbb{C}^3$ patch in the left panel in fig. (\ref{fig-C3}). In consequence, after this limit has been taken, the remaining constant contribution to the free energy will be just the $F_g^{\mathbb{C}^3}$  of the corresponding ramification point $\widetilde{Y}_{*,+}$. 

\begin{figure}[htb]
\begin{center}
\includegraphics[width=0.8\textwidth]{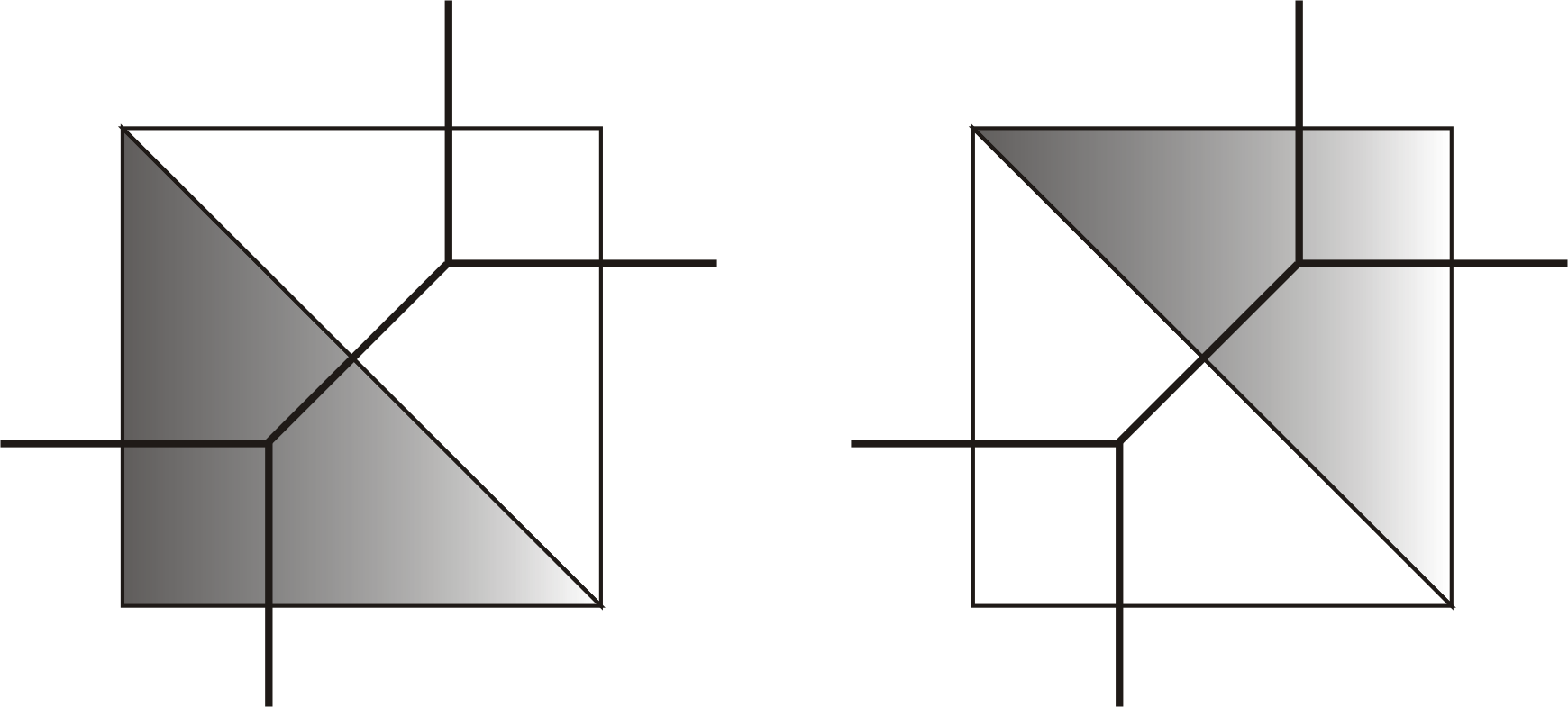} 
\begin{quote}
\caption{\emph{Toric diagram for the conifold. By appropriate reparametrization of $X$ and $Y$ in the mirror curve equation, and subsequently sending the K{\"a}hler parameter $Q\to 0$, we can focus on either $\mathbb{C}^3$ patch highlighted in grey, reproducing one of the $\mathbb{C}^3$ patches in fig. \ref{fig-C3}. One of the two ramification points of the conifold in (\ref{brptsConi}) or (\ref{brptsConi2}) runs away in $Q\to 0$ limit, while the remaining ramification point reduces to the relevant $\mathbb{C}^3$ branch point in (\ref{brptsC3}).} }\label{fig-conifold}
\end{quote}
\end{center}
\end{figure}

We can also focus on the other patch of the conifold digram, shown in grey in the right panel in fig. (\ref{fig-conifold}). To achieve that we introduce rescaled coordinates $X'=QX, Y'=QY$, so that $\widetilde{H}(X,Y)=1+X+Y+QXY\mapsto \widetilde{H}'(X',Y')=Q+X'+Y'+X'Y'$.
Regularizing this resulting curve by introducing framing we get
\begin{equation}
\widetilde{H}'_{-f}(X',Y') = Q + Y' + X' Y'^f + X' Y'^{f + 1},
\end{equation}
and the two corresponding ramification points read now
\begin{equation}
\widetilde{Y}'_{*,\pm} = \frac{1 - f - Q - f Q \pm \sqrt{-4 f^2 Q + (-1 + f + Q + f Q)^2}}{2 f}.
\end{equation}
In the small $Q$ limit the curve $\widetilde{H}'_{-f}(X,Y)$ reduces to $H'_{-f}(X,Y)$ in (\ref{HfC3}), while the ramification points have the expansion
\be
\widetilde{Y}'_{*,+} = \frac{f Q}{1 - f} + \frac{f Q^2}{(1 - f)^3} + \mathcal{O}(Q^2), \qquad 
\widetilde{Y}'_{*,-} = \frac{1-f}{f} + \frac{Q}{(f-1) f} + \mathcal{O}(Q^2), \label{brptsConi2}
\ee
and now it is $\widetilde{Y}'_{*,-}$ which reduces to $Y'_{*}$ in (\ref{brptsC3}), while $\widetilde{Y}'_{*,+}$ runs away. Correspondingly, the surviving, highlighted part of the conifold toric diagram in the right side of fig. \ref{fig-conifold} reduces to the $\mathbb{C}^3$ patch in the right panel in fig. \ref{fig-C3}. In this representation, after the decoupling limit has been taken, the remaining contribution to the free energy will also be that of $F_g^{\mathbb{C}^3}$ for the remaining ramification point $\widetilde{Y}'_{*,-}$.

We therefore showed that in appropriate decoupling limits the conifold curve reduces to some version of the $\mathbb{C}^3$ curve, and correspondingly each of its two ramification points contributes one factor of $F_g^{\mathbb{C}^3}$. Therefore such contributions must be present simultaneously if the decoupling limits is not taken, and the constant contributions of the conifold mirror curve must be equal to $2F_g^{\mathbb{C}^3}$, in agreement with (\ref{conjecture3}). For the conifold we prove that this is indeed the case in the next section.

\begin{figure}[htb]
\begin{center}
\includegraphics[width=0.4\textwidth]{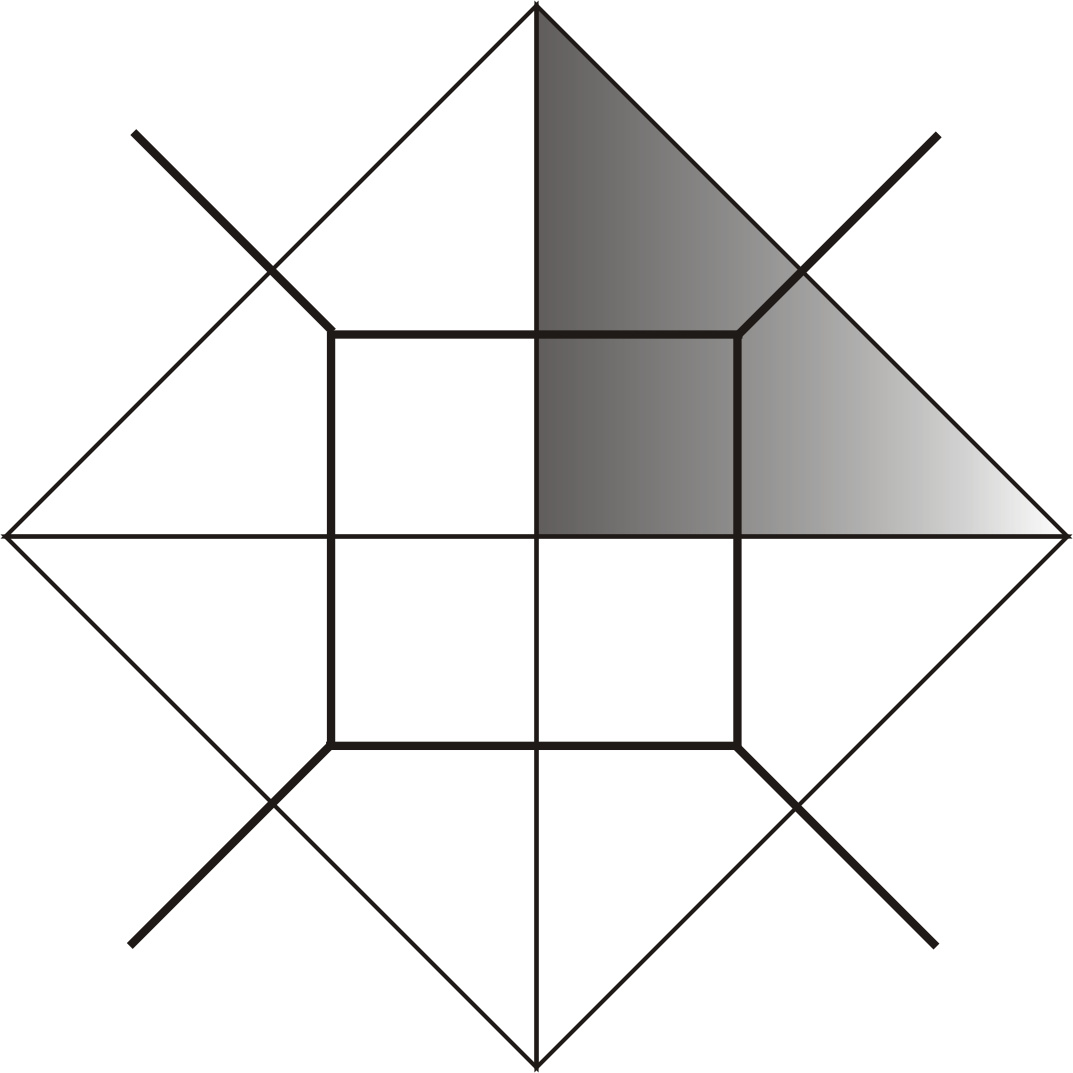} 
\begin{quote}
\caption{\emph{Toric diagram for local $\mathbb{P}^1\times \mathbb{P}^1$ geometry, which reduces to the $\mathbb{C}^3$ (corresponding to the grey triangle patch) in the limit of vanishing K{\"a}hler parameters.}} \label{fig-P1P1}
\end{quote}
\end{center}
\end{figure}

It is also clear that the above mechanism can be generalized to arbitrary toric manifolds. For example, the local $\mathbb{P}^1\times \mathbb{P}^1$ geometry is encoded in the mirror curve
\be
\widehat{H}_{-f}(X,Y) = 1 + Y + X Y^f + \frac{Q_1}{X Y^f} + \frac{Q_2}{Y},   \label{HP1P1}
\ee
with $Q_1$ and $Q_2$ representing K{\"a}hler parameters of the two independent two-cycles. There are 4 ramification points, and the corresponding Newton polygon and toric diagram, shown (for $f=0$) in fig. \ref{fig-P1P1}, consist respectively of 4 triangles and 4 trivalent vertices. By appropriate rescaling of $X$ and $Y$ and subsequently taking the limit $Q_1,Q_2\to 0$, one can focus on any of the 4 patches of $\mathbb{C}^3$. In particular, taking $Q_1,Q_2\to 0$ directly in (\ref{HP1P1}), the curve (\ref{HP1P1}) reduces to $H_f(X,Y)$ in (\ref{HfC3}). At the same time we focus on the patch highlighted in grey in fig. \ref{fig-P1P1}, which is nothing but the patch in the left panel in fig. \ref{fig-C3}, and the surviving ramification point reduces to $Y_*$ in (\ref{brptsC3}). This ramification point contributes one factor of $F_g^{\mathbb{C}^3}$ to the partition function. Considering various decoupling limits, analogously as in the conifold case, we can focus on the neighbourhood of each of the 4 ramification points of $\widehat{H}_f$, and in consequence the total constant part of the free energy will be given by $4 F_g^{\mathbb{C}^3}$. 

For a general toric geometry, whose toric diagram consists of $n$ vertices and the corresponding mirror curve has (for generic framing) $n$ ramification points, the total constant part of the free energy would be equal to $n F_g^{\mathbb{C}^3}$, in agreement with (\ref{conjecture3}).

\subsubsection{Proof for the resolved conifold}   \label{sss:conifold}

Let us now prove conjecture \ref{c:X} when $X$ is the resolved conifold.
\begin{lem}\label{l:conifold}
Let $X= \mathcal{O}_{\mathbb{P}^1} (-1) \oplus  \mathcal{O}_{\mathbb{P}^1} (-1) $ be the resolved conifold, and $\Sigma_f$ its framed mirror curve. Then
\begin{equation}
\lim_{r\to 0} F_g^X = 2 F_g^{\mathbb{C}^3}.
\end{equation}
\end{lem}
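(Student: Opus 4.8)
The plan is to set up the explicit framed mirror curve for the conifold and take the two ramification points into account simultaneously, showing directly that as $r\to 0$ (equivalently $Q\to 0$) the genus $g$ free energy splits into two copies of the $\mathbb{C}^3$ answer. First I would write the regularized conifold curve $\widetilde{H}_{-f}(X,Y)=1+Y+XY^f+QXY^{f+1}=0$ and record its two ramification points $\widetilde{Y}_{*,\pm}$, whose small-$Q$ expansions are given in (\ref{brptsConi}). The key structural observation is that $F_g=W_0^g$ is built by definition \ref{d:fg} as a sum of residues over \emph{all} ramification points, and similarly every correlator $W_n^g$ entering the recursion (definition \ref{d:recursion}) is a sum $\sum_\lambda \mathrm{Res}_{q=a_\lambda}$ over the same two points. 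So the natural strategy is to track how each of the two residue contributions behaves in the limit $r\to 0$.

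The main computation I would carry out is to show that, in the $r\to 0$ limit, the two ramification points decouple: the local data of the recursion (the Eynard kernel $K_\lambda$, the deck transformation $s_\lambda$, and the primitive $\Phi$) near each $\widetilde{Y}_{*,\pm}$ converges to the corresponding local data of an appropriate $\mathbb{C}^3$ curve. Concretely, near $\widetilde{Y}_{*,+}$ the curve limits to $H_{-f}$ and near $\widetilde{Y}_{*,-}$ (after the rescaling $X'=QX$, $Y'=QY$ used in the text) it limits to $H'_{-f}$, both of which are $\mathbb{C}^3$ curves. I would argue that the cross terms in the recursion — those where the summand couples a neighborhood of $\widetilde{Y}_{*,+}$ to a neighborhood of $\widetilde{Y}_{*,-}$ — are suppressed as $r\to 0$, because the bi-differential $W_2^0$ connecting the two well-separated ramification points (one of which runs off to infinity like $1/Q$) contributes terms that vanish in the limit. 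Given this decoupling, the residue at $\widetilde{Y}_{*,+}$ reproduces $F_g^{\mathbb{C}^3}$ computed from $H_{-f}$, the residue at $\widetilde{Y}_{*,-}$ reproduces $F_g^{\mathbb{C}^3}$ computed from $H'_{-f}$, and since the two $\mathbb{C}^3$ parametrizations are related by the symplectic transformation $(X,Y)\mapsto(X^{-1},Y^{-1})$ they yield the same number; the total is $2F_g^{\mathbb{C}^3}$.

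The hard part will be making the decoupling rigorous rather than heuristic. One must prove that the recursion, which is globally coupled through $W_2^0$ and through the sum over ramification points at every level, genuinely factorizes in the limit. This requires controlling the $Q$-dependence of each $W_n^g$ uniformly and showing the inter-point contributions vanish order by order in the topological recursion. A clean way to organize this is by induction on $2g-2+n$: assuming each lower $W_{n'}^{g'}$ splits as a sum of a piece localized near $\widetilde{Y}_{*,+}$ plus a piece localized near $\widetilde{Y}_{*,-}$ (plus terms vanishing as $Q\to 0$), one checks that the recursion preserves this splitting, since the residue at a given $a_\lambda$ only sees the local expansions of the correlators entering the kernel, and the regular part of $W_2^0$ evaluated across the two far-separated points supplies the vanishing contributions. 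The subtlety I would watch most carefully is the behavior of the runaway ramification point: because $\widetilde{Y}_{*,-}\sim (1-f)/(fQ)$ diverges as $Q\to 0$, one cannot naively set $Q=0$ in the original parametrization, which is exactly why the rescaling $X'=QX,\,Y'=QY$ is needed to bring that point back to a finite location and identify its contribution with $F_g^{\mathbb{C}^3}$ for $H'_{-f}$.
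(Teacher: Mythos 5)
Your strategy is the decoupling heuristic that the paper itself presents (in the subsection on ramification points versus $\mathbb{C}^3$ patches) only as motivation for Conjecture \ref{c:X}; the paper's actual proof of Lemma \ref{l:conifold} is different and exact. It parameterizes $\Sigma_f$ by $Y=t$, $X=-t^{f+1}(1+t)/(r+t)$, and exploits the involution $w=a_0a_1/t=r/t$, which exchanges the two ramification points while acting as $Y\mapsto rY^{-1}$, $X\mapsto r^fX^{-1}$ (a symplectic transformation preserving the ramification data). This gives $W_1^g(t)=W_1^g(w)$ and $\Phi(t)=\Phi(w)+\mathrm{cst}-\log(r)\log X(w)$; since the constant and the $\log X$ term have vanishing residue against $W_1^g$, the two residues in Definition \ref{d:fg} are \emph{exactly} equal for all $r$, so $F_g=2\,\mathrm{Res}_{t=a_0}(\cdots)$, and only then is the limit $r\to 0$ taken at the single well-behaved point $a_0\to -f/(f+1)$.

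The genuine gap in your proposal is that the central step --- the factorization of the recursion into two decoupled $\mathbb{C}^3$ problems as $Q\to 0$ --- is asserted rather than proved, and you acknowledge as much. Two concrete obstacles are left unresolved. First, the correlators $W_n^g$ are globally defined meromorphic differentials with poles at both ramification points, and the residue at $a_\lambda$ in Definition \ref{d:recursion} sees the full local expansion there, including the ``tail'' of the pole sitting at the other point; to show the cross-contributions vanish you would need uniform, order-by-order estimates in $Q$ on these tails, which your induction scheme names but does not supply. Second, and more seriously, the quantity actually entering $F_g$ is $\mathrm{Res}\,\Phi\,W_1^g$, and near the runaway point $\widetilde{Y}_{*,-}\sim(1-f)/(fQ)$ the primitive $\Phi$ is not simply ``local data converging to the $\mathbb{C}^3$ answer'': it acquires a divergent piece proportional to $\log r$ (visible in the paper's identity $\Phi(t)=\Phi(w)+\mathrm{cst}-\log(r)\log X(w)$). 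Your argument never shows that this divergent piece contributes zero; that requires the separate fact $\mathrm{Res}_{w=a_0}(\log X(w))\,W_1^g(w)=0$, which is precisely the ingredient the paper invokes and which your rescaling $X'=QX$, $Y'=QY$ does not by itself provide. Without these two pieces the proposal remains a plausible plan, not a proof.
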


\begin{proof}
The framed curve mirror $\Sigma_f$ to the conifold is
\begin{equation}
H_f(X,Y):= 1 + X Y^{-f} + Y + r X Y^{-f-1} = 0.
\end{equation}
It can be parameterized by
\begin{equation}
Y = t, \qquad X = - \frac{t^{f+1}(1+t)}{r+t}.
\end{equation}
This is a genus 0 curve with four punctures. It has two ramification points $a_0, a_1 \in \Sigma_f$ at
\begin{align}
a_0=& \frac{-2 r-f (r+1)+\sqrt{r-1} \sqrt{(r-1) f^2+4 r f+4 r}}{2 (f+1)}, \\
a_1 =& -\frac{2 r+f (r+1)+\sqrt{r-1} \sqrt{(r-1) f^2+4 r f+4 r}}{2 (f+1)}.
\end{align}

We want to compute the $F_g$ by
\begin{equation}
F_g = \frac{(-1)^g}{2-2g} \sum_{\lambda=0}^1 \underset{t=a_\lambda}{\text{Res}} \Phi(t) W_1^g(t),
\end{equation}
and take the $r \to 0$ limit. 

Note that in the $r \to 0$ limit, we obtain that $a_1 \to 0$ and $a_0 \to - \frac{f}{f+1}$, the latter being the ramification point of the curve mirror to $\mathbb{C}^3$. It is also clear that in the limit $r \to 0$, the curve becomes just the curve mirror to $\mathbb{C}^3$ (compare with \eqref{HxyC3}). Thus, it is clear that
\begin{equation}\label{eq:a0}
\lim_{r \to 0} \left(  \frac{(-1)^g}{2-2g} \underset{t=a_0}{\text{Res}} \Phi(t) W_1^g(t)\right) = F_g^{\mathbb{C}^3}.
\end{equation}
The question is what happens for the residue at the other ramification point $a_1$, since at $r \to 0$ we have that $a_1 \to 0$, which is a puncture of the Riemann surface (hence $\Phi(t)$ is not well defined there because of logarithms).

But it turns out that the curve $\Sigma_f$ has a nice symmetry which comes to the rescue. Consider reparameterizing the curve by 
\begin{equation}
w = \frac{a_0 a_1}{t} = \frac{r}{t},
\end{equation}
where we used the fact that $a_0 a_1 = r$. This reparameterization clearly exchanges the two ramification points, \emph{i.e.}, $t = a_1$ corresponds to $w = a_0$ and vice-versa. Using this reparameterization we can write
\begin{equation}
  \frac{(-1)^g}{2-2g} \underset{t=a_1}{\text{Res}} \Phi(t) W_1^g(t) =  \frac{(-1)^g}{2-2g} \underset{w=a_0}{\text{Res}} \Phi \left(  \frac{a_0 a_1}{w} \right) W_1^g \left(  \frac{a_0 a_1}{w} \right) .
\end{equation}
But it turns out that the curve behaves very nicely under this reparameterization. Indeed, we have that
\begin{align}
Y(t) =& Y \left(  \frac{a_0 a_1}{w} \right)  = r Y^{-1}(w),\\
X(t) =& X \left(  \frac{a_0 a_1}{w} \right) = r^f X^{-1}(w).
\end{align}
But this is just a transformation that preserves the symplectic form
\begin{equation}
\frac{\mathrm{d} X}{X} \wedge \frac{\mathrm{d} Y}{Y}.
\end{equation}
Moreover, it preserves the ramification points of $X$, hence the deck transformation, etc. So the recursion kernel is the same in both $t$ and $w$ coordinates, thus, the correlation functions should be the same. What this means is that we have, for the one-point correlation functions,
\begin{equation}
W^g_1(t) = W^g_1 \left( \frac{a_0 a_1}{w} \right) = W^g_1(w).
\end{equation}
It is easy to check computationally that this is indeed satisfied.

Moreover, we know that the one-form $\omega$ transforms as:
\begin{align}
\omega(t) =& \log Y(t) \frac{\mathrm{d} X(t)}{X(t)} \\
=& \log Y \left(  \frac{a_0 a_1}{w} \right) \frac{\mathrm{d} X \left( \frac{a_0 a_1}{w} \right)}{X \left( \frac{a_0 a_1}{w} \right)} \\
=& - \left(\log(r) - \log Y(w) \right) \frac{\mathrm{d} X(w)}{X(w)}\\
=& \omega(w) - \log(r)  \frac{\mathrm{d} X(w)}{X(w)}.
\end{align}
Thus we obtain that
\begin{align}
\Phi(t) =& \Phi  \left( \frac{a_0 a_1}{w} \right) \\
=& \int_0^{a_0 a_1/w} \omega(z) \\
=& \int_\infty^w \omega \left( \frac{a_0 a_1}{u} \right) \\
=& \int_\infty^w \left(\omega(u) - \log(r)  \frac{\mathrm{d} X(u)}{X(u)}  \right) \\
=& \Phi(w) + \text{cst} - \log(r) \log X(w).
\end{align}
Putting all this together, we obtain
\begin{align}
  \frac{(-1)^g}{2-2g} \underset{t=a_1}{\text{Res}} \Phi(t) W_1^g(t) =& \frac{(-1)^g}{2-2g} \underset{w=a_0}{\text{Res}} \Phi \left(  \frac{a_0 a_1}{w} \right) W_1^g \left(  \frac{a_0 a_1}{w} \right) \\
=&  \frac{(-1)^g}{2-2g} \underset{w=a_0}{\text{Res}} \left( \Phi(w) + \text{cst} - \log(r) \log X(w) \right) W^g_1(w).
\end{align}
$W^g_1(w)$ has no residue at $a_0$, hence the residue of the term $(\text{cst}) W^g_1(w)$ vanishes. It is also easy to prove (see for instance \cite{Eynard:2007, Eynard:2008, Norbury:2010}) that
\begin{equation}
 \underset{w=a_0}{\text{Res}} \left( \log X(w) \right) W^g_1(w) = 0.
\end{equation}
Therefore, we get that
\begin{equation}
 \frac{(-1)^g}{2-2g} \underset{t=a_1}{\text{Res}} \Phi(t) W_1^g(t) =  \frac{(-1)^g}{2-2g} \underset{w=a_0}{\text{Res}} \Phi(w) W^g_1(w).
\end{equation}
This means that the two residues in the calculation of $F_g$ contribute exactly the same term to $F_g$, that is,
\begin{equation}
F_g = 2  \frac{(-1)^g}{2-2g} \underset{t=a_0}{\text{Res}} \Phi(t) W_1^g(t).
\end{equation}
Hence from \eqref{eq:a0} we conclude that
\begin{equation}
\lim_{r \to 0} F_g = 2 F_g^{\mathbb{C}^3}.
\end{equation}
\end{proof}

\begin{rem}
Note that what we proved for the conifold is much stronger than lemma \ref{l:conifold}. We proved that for every $F_g$, the two pairs of pants in the decomposition of the curve $\Sigma_f$ (\emph{i.e.} the two ramification points) give precisely the same contribution to the full $F_g$, not just for constant terms. Hence there really is a pair of pant decomposition here, at the level of the full free energy. However, this is probably an artefact of the nice symmetry of the curve $\Sigma_f$; we expect this decomposition to hold only for constant terms for a general toric Calabi-Yau threefold.  Note also that this lemma completes the proof of the full remodeling conjecture for the conifold, since the reduced part was proved in \cite{Eynard:2008ii}.
\end{rem}

\subsection{Beyond constant maps -- remodeling conjecture for the total free energy}

At this point it is clear that a proper understanding of the remodeling conjecture (given in section \ref{sss:conjecture1}) should involve both constant contributions, as well as non-constant contributions. In other words, our results strongly support the claim, that the partition function computed by the topological recursion reproduces the full Gromov-Witten partition function given in (\ref{Ztotal}). This claim follows by combining tests for non-constant contributions already present in the literature, which we mentioned in the introduction, with our results for constant contributions. It is instructive to illustrate this claim in at least one example. Let us therefore consider the total term $F_2$ for the resolved conifold.

Regarding constant contributions to $F_2$, our results, in particular the proof in section \ref{sss:conifold}, assert that they are given by twice the contribution (\ref{eq:FP}) (with $g=2$). On the other hand, it is known that the non-constant part of the free energies for the conifold, for $g\geq 2$, are given by polylogarithms:
\begin{equation}
F_g^{\beta\neq 0} = (-1)^{g + 1} \frac{B_{2g}}{2g (2g-2)!}  \textrm{Li}_{3-2g}(Q).
\end{equation}
Therefore the total genus two Gromov-Witten free energy for the conifold reads
\begin{equation}
F_2^{GW} = \frac{1}{2880} + \frac{1}{240} \textrm{Li}_{-1} (Q).
\end{equation}
On the other hand an explicit application of the topological recursion for the conifold mirror curve results in
\begin{equation}
F_2 = \frac{1 + 10Q + Q^2}{2880 (1-Q)^2}.
\end{equation}
Both these results indeed agree
\begin{equation}
F^{GW}_2 = F_2.
\end{equation}

\section{Discussion}

In this paper we studied the free energies constructed from the Eynard-Orantin recursion for mirror curves. We showed that there is a pair of pants decomposition built in the recursion, through the residue calculation at the ramification points. We argued that the constant terms computed from the recursion agree with the constant terms obtained in Gromov-Witten theory. This follows from the pair of pant decomposition of the mirror curves. In consequence, we extended the scope of the remodeling conjecture to the full free energies.

This work however opens up a certain number of questions. Here are a few:

\subsection{Limit theorem}

In \cite[section 8]{Eynard:2007}, limits of families of curves are also studied. Roughly speaking, what they obtain is that the recursive construction of the $F_g$ commutes with the limit on the curve. Their setup is however different from ours; they consider a limit where a branch point becomes singular.

In the case studied in this paper, what we obtained is that the limit $r_i \to 0$ does not commute with the recursion. Indeed, if $\Sigma^X$ is the curve mirror to $X$,
\begin{equation}
\lim_{r_i \to 0} F_g[\Sigma^X] \neq  F_g \left[ \lim_{r_i \to 0} \Sigma^X \right],
\end{equation}
by which we mean that the $r_i \to 0$ limit of the $F_g$ of a given mirror curve $\Sigma^X$ are not the same as the $F_g$ of the curve with $r_i=0$. More precisely, it is easy to see from the definition that for any mirror curve, after setting $r_i=0$ the curve $\Sigma^X$ reduces to the curve $\Sigma^{\mathbb{C}^3}$ mirror to $\mathbb{C}^3$. So $F_g  \left[ \lim_{r_i \to 0} \Sigma \right] = F_g[\Sigma^{\mathbb{C}^3}]$. But what we argued in the previous section is that
\begin{equation}
\lim_{r_i \to 0} F_g[\Sigma^X] = n F_g[\Sigma^{\mathbb{C}^3}],
\end{equation}
that is, there is an all important multiplicative factor of $n$, which is the number of pair of pants in $\Sigma^X$.

So in our context, the recursive construction does not commute with the limit $r_i \to 0$. It would be interesting to compare more precisely with the limit theorem of \cite{Eynard:2007}, even though the setups are different. It would also be interesting to see whether the kind of limit statement that we obtained is a property of mirror curves, or whether it holds for general curves in $\mathbb{C}^*$, using the relation with pair of pants decompositions.

\subsection{Symplectic invariance}   \label{ss-symplectic}

In \cite{Eynard:2007, Eynard:2007ii, Eynard:2008} it is stated that the free energies $F_g$ constructed from the recursion are invariant under the group of transformations of the maps $x$ and $y$ that preserve the symplectic form $\mathrm{d} x \wedge \mathrm{d} y$ (in the $\mathbb{C}*$ context, the symplectic form is $\frac{\mathrm{d} x}{x} \wedge \frac{\mathrm{d} y}{y}$). 

From the form of the recursion, invariance is clear for any transformation which does not modify the ramification points of the $x$-map. The tricky part of the statement is for transformations that do modify the ramification points, in particular transformations that \emph{change the number of ramification points}. Invariance under these transformations was apparently shown in \cite{Eynard:2007ii}.

However, as we argued in subsection \ref{s:bpoints}, from our study of framing it became clear that the $F_g$ are not strictly invariant under framing transformations, which is a particular type of symplectic transformations. If we treat $f$ as a parameter, then one can show that the $F_g$ obtained from the recursion do not depend on $f$. However, there exists specific choices of framing for which the recursion does not produce the same $F_g$. This is because some framing transformations change the number of ramification points, hence accordingly change the constant terms of the $F_g$. A simple example was shown in example \ref{ex:c3framing}; for the choices of framing $f=0,-1$, the $X$-map has simply no ramification points, hence produces $F_g$ that are trivially zero, while for a generic choice of framing the $F_g$ are surely non-zero.

As a result, we argued that we should treat the framing $f$ as a parameter, and see it as some sort of regularization procedure. In this sense, ``symplectic invariance'' is somewhat restored since the $F_g$ do not depend on $f$, but strictly speaking the recursion produces different results for different choices of framing.

It is then worth asking whether this is a consequence of the fact that mirror curves are curves in $(\mathbb{C}^*)^2$, while the curves initially considered by Eynard and Orantin were affine curves, and the proof of \cite{Eynard:2007ii} may only apply to the latter. Let us study this question further.

\subsubsection{A counterexample}

It turns out that a similar phenomenon that we encountered with framing occur for much simpler curves, such as standard affine curves in $\mathbb{C}^2$. Let us study a very simple counterexample to the statement that the $F_g$ are invariant under symplectic transformations. 

Consider the affine curve
\begin{equation}
H(x,y) = y^2 - y x + 1 = 0.
\end{equation}
This curve has been studied for instance in \cite{Norbury:2009} (see also \cite[section 5.1]{Norbury:2010}). The $x$-map has two ramification points at $(2,1)$ and $(-2,-1)$. The $F_g$ can be computed; they are the orbifold Euler characteristics of the moduli spaces of genus $g$ curves \cite[Theorem 1]{Norbury:2009}:
\begin{equation}
F_g = \chi(\mathcal{M}_g).
\end{equation}
In particular, they are clearly non-zero.

Now consider the symplectic transformation $(x,y) \mapsto (x',y') = (y, -x)$, which preserves $\mathrm{d} x \wedge \mathrm{d} y = \mathrm{d} x' \wedge \mathrm{d} y'$. The curve becomes
\begin{equation}
G(x',y') = x'^2 + x' y' + 1 =0.
\end{equation}
The $x'$-map has of course no ramification points, it is a one-to-one map. Hence if we apply the recursion to this new $x'$-map, the free energies $F_g$ vanish identically! Clearly, they are not invariant under this particular symplectic transformation, since they were non-vanishing in the original parameterization. The problem is that the transformation changes the number of ramification points.

\subsubsection{SPP geometry}

It should be noted that this problem with symplectic transformations does not appear only when the newly parameterized curve has no ramification points. It also appears when the number of ramification points changes but remains non-zero. 

For instance, consider the mirror curve to the SPP geometry studied in \cite{Ooguri:2010} in sec. 3.2, which is of the form $Q_1 y^2 + x y + x + (1+Q_1 Q_2)y + Q_2 = 0$.
The $x$-map from this mirror curve has two ramification points. We computed the $F_g$ from this curve, and extracted the constant terms. What we obtained is that the constant terms $N_{g,0}$ are twice the $F_g$ of $\mathbb{C}^3$, consistent with the fact that there are two ramification points.

However, according to the general formulae in Gromov-Witten theory \eqref{eq:Fgrelation}, the constant terms $N_{g,0}$ should be three times the $F_g$ of $\mathbb{C}^3$, since the Euler characteristic of the SPP geometry is $\chi(X) = 3$.

What happens is that the choice of framing in which the curve is presented in \cite{Ooguri:2010} is one of these pathological choices of framing. It turns out that the general framed curve has indeed three ramification points, the same number as $\chi(X)$, as expected. Thus, for the framed curve we get the right constant terms $N_{g,0}$, which indeed do not depend on the framing $f$. But to get the right answer one first needs to regularize the curve by introducing framing; otherwise we do not get the right answer.

\begin{rem}
Indeed, for mirror curves, for any choice of parameterization such that the number of ramification points is not equal to $\chi(X)$, we will not get the right constant terms, from our results of the previous sections. But after regularizing the curve by introducing framing, the number of ramification points will be equal to $\chi(X)$, and the constant terms will match the expected numbers $N_{g,0}$ from Gromov-Witten theory.
\end{rem}

\subsubsection{Modify the recursion?}

Of course, symplectic invariance is one of the most desirable property of the $F_g$. But as we saw, there exists some symplectic transformations such that the newly parameterized curve does not produce the same invariants than the old curve. The examples that we gave all involve transformations that change the number of ramification points. It would be important to clarify which subgroup of transformations precisely are problematic.

But what would be even nicer is to modify the formulation of the recursion in order to restore explicit symplectic invariance. That is, rewrite the recursion such that for \emph{any} two curves related by a symplectic transformation, the $F_g$ computed by the recursion are equal. Perhaps what one needs to do is take into account explicitly the ramification points of both the $x$ and $y$ map simultaneously, or projectivize the curve in $\mathbb{P}^2$ in order to work in a compact setup.

\end{document}